\declaretheorem[name=Theorem,numberwithin=section]{theorem}
\declaretheorem[name=Corollary,numberlike=theorem]{corollary}
\declaretheorem[name=Lemma,numberlike=theorem]{lemma}
\declaretheorem[name=Proposition,numberlike=theorem]{proposition}
\declaretheorem[name=Definition,style=definition,numberlike=theorem,qed={$\lrcorner$}]{definition}
\declaretheorem[name=Remark,style=definition,numberlike=theorem,qed={$\lrcorner$}]{remark}
\renewcommand{\leq}{\leqslant}
\renewcommand{\geq}{\geqslant}
\newcommand{\indep}{\perp \!\!\! \perp}
\newcommand{\EE}{\mathbb{E}}
\newcommand{\R}{\mathbb{R}}
\newcommand{\cA}{\mathcal{A}}
\newcommand{\cP}{\mathcal{P}}
\newcommand{\cX}{\mathcal{X}}
\newcommand{\bx}{\mathbf{x}}
\newcommand{\bz}{\mathbf{z}}
\newcommand{\bt}{\mathbf{t}}
\newcommand{\bs}{\mathbf{s}}
\newcommand{\bX}{\mathbf{X}}
\newcommand{\bY}{\mathbf{Y}}
\newcommand{\by}{\mathbf{y}}
\newcommand{\TAf}[1]{\mathrm{T}({#1})}
\newcommand{\definedas}{\coloneqq}
\newcommand{\E}{\mathbb{E}}
\newcommand{\Le}{\mathcal{L}}
\newcommand{\Lee}{L}
\newcommand{\lB}{a^\text{left}}
\newcommand{\rB}{a^\text{right}}
\newcommand{\lt}{\text{left}}
\newcommand{\rt}{\text{right}}
\newcommand{\Law}[1]{\operatorname{Law}(#1)}
\newcommand{\Paths}{\operatorname{Paths}}
\newcommand{\Tracks}{\operatorname{Tracks}}
\newcommand{\paths}[2]{\operatorname{Paths}({{#1},{#2}})}
\newcommand{\tracks}[2]{\operatorname{Tracks}({{#1},{#2}})}
\newcommand{\Hopf}{\mathcal{H}}
\newcommand{\HopfInv}{\Hopf^\star}
\title{Proper Scoring Rules, Gradients, Divergences, and Entropies for Paths and Time Series}
\author{Patric Bonnier\thanks{\textit{bonnier@maths.ox.ac.uk}} }
\author{Harald Oberhauser\thanks{\textit{oberhauser@maths.ox.ac.uk}}}
\date{}
\affil{Mathematical Institute, University of Oxford}
\begin{document}
	\maketitle
	
	\begin{abstract}
    Many forecasts consist not of point predictions but concern the evolution of quantities.
    For example, a central bank might predict the interest rates during the next quarter, an epidemiologist might predict trajectories of infection rates, a clinician might predict the behaviour of medical markers over the next day, etc.
    The situation is further complicated since these forecasts sometimes only concern the approximate ``shape of the future evolution'' or ``order of events''.
    Formally, such forecasts can be seen as probability measures on spaces of equivalence classes of paths modulo time-parametrization.
    We leverage the statistical framework of proper scoring rules with classical mathematical results to derive a principled approach to decision making with such forecasts.
    In particular, we introduce notions of gradients, entropy, and divergence that are tailor-made to respect the underlying non-Euclidean structure.
  \end{abstract}
	
	\section{Introduction}
	Scoring rules provide a principled approach to form and evaluate probabilistic predictions.
  The earliest applications go at least back to the evaluation of weather forecasts \cite{brier1950verification}, but have since then developed into a rich theoretical framework that plays a central part in modern statistical inference.
  We refer to \cite{gneiting2007strictly} and \cite{dawid2007geometry} for a general background.  
  The theoretical underpinnings of scoring rules are well-developed, but nearly all of the literature focuses on prediction of vector-valued or scalar-valued quantities.  
  The aim of this article is develop a scoring rule framework for an important class of non-Euclidean data, namely sequential data -- both in discrete and continuous time.

\paragraph{The Drawbacks of (Na\"ive) Vectorization.}  Given a dataset consisting of multi-variate time series (TS), a common approach is to flatten each TS into a long vector and then use a standard pipeline for vector-valued data.
  However, this approach has several drawbacks.
  Firstly, there's trouble whenever the different TS are irregularly sampled or are of different length since this embeds the different TS in Euclidean spaces of different dimension.
  Typically this addressed by ad-hoc approaches such as adding synthetic data by interpolation or dropping data points.
  Secondly, often the relevant information is independent of the time-parametrization (``\emph{time-warping invariance}''), at least to a large degree; for instance, the meaning of a spoken word or an object being filmed are both independent of how fast or slow the audio signal or video is presented.
  Finally, many models are naturally formulated in continuous time rather than discrete time; for example, stochastic differential equations form a popular class of models in many applications and it is unclear how to evaluate such continuous time models in a scoring rule framework besides above naive vectorization on an arbitrary time grid. 
  
  \paragraph{A Non-Euclidean Data Domain.}
Key to our approach is that classic tools from pure mathematics faithfully capture the Non-Euclidean structure of the space of (unparametrized) paths.
While there is no linear structure that allows for addition of paths of different length, any two paths can be concatenated into one path and any path can be run backwards.
Both these operations -- \emph{concatenation and reversal} -- are independent of the choice of parametrization, hence they also apply to equivalence classes of paths under reparametrization.
We refer to an unparametrized path -- that is an equivalence class of paths under reparametrization -- as a \emph{track}, as it is defined uniquely by the track it carves out in the space where it evolves.
A classical result \cite{chen-58} is that there is a \emph{``feature map''} from the set of tracks into a linear space that is \emph{functorial} and \emph{universal}; the former means that operations on tracks (concatenation and reversal) turn into algebraic operations in feature space, the latter means that any function of tracks can be approximated as a linear functional of this map.  
Moreover, this map is given as a series of iterated integrals which makes it amenable to computation and we refer to it as the \emph{signature map}.
In fact, the co-domain of this feature map (``\emph{the feature space''}) is not only a linear space but forms a so-called Hopf algebra and it is the Hopf algebra structure that captures operations on tracks as algebraic operations. 
The third mathematical ingredient that we use are gradients of functions of
tracks: the usual definition of linear (Fr\'echet) differentiability can sometimes be unsuitable for such functions due to the above lack of linear structure.
However, Pansu generalized classical differentiation to a special class of groups and we leverage this to define \emph{gradients of functions of (unparametrized) paths}.
We show that the usual guarantees of gradient descent algorithms apply which allows us to compute quantities associated with our scoring rule framework (as even in the case of Euclidean data, many quantities are not given in closed form, but can be found by first order methods). 

\paragraph{Outline.}
Section~\ref{sec:scoring} recalls the basic definitions and general of scoring rules. 
Section~\ref{sec: structure of tracks} contains the theoretical background; it formalizes the structure of the spaces of (unparametrized paths), and introduces the signature feature map and the Hopf algebra structure of its co-domain (the feature space). 
Section \ref{sec:scoring tracks} then shows how these quantities lead to natural scoring rules on the non-Euclidean space of tracks; in particular the so-called antipode of the Hopf algebra plays key role to relate the scoring rule framework to structural properties of tracks. 
From general results this then immediately leads to definitions of entropy, divergence, and mutual information that -- unlike the (na\"ive) vectorization approach outlined above -- are compatible with the structure of (probability measures) on spaces of (unparametrized) paths. 
Section~\ref{sec:gradients} then utilizes that one may identify a track as ``group-like'' element and shows that the concept of Pansu differentiabilty leads to a natural notion of gradient descent on the space of paths resp. tracks.
Finally, Section~\ref{sec:experiments} demonstrates that despite this approach being motivated by pure maths, the resulting quantities lead to efficiently computable quantities that have some advantageous properties compared to other methods with similar invariances.

\subsection{Related Work}
One of the earliest empirical insights for time series data was that time-parametrization (``time warping'') invariance is of key importance \cite{SakoeChiba71, 1163055}.
Arguably the most popular way to address this invariance is via the classical dynamic time warping distance (DTW) and its many variations that introduce a distance between time series by searching over time changes.
For example, \cite{Cuturi2017SoftDTWAD, Blondel2021DifferentiableDB} introduce a regularised version of DTW, so-called soft DTW (s-DTW) which addresses the fact that the DTW distance is not differentiable, making it viable fo use in deep learning pipelines. In the process of doing so it loses the invariance that DTW enjoys and introduces a trade-off of smoothness versus invariance.
A more general point is that DTW and its variations do not aim to provide the full forecasting framework of scoring rules (divergence between measures, entropy, mutual information of TS) and although DTW approaches successfully deal with time-parametrization, they ignore other structural properties such as concatenation and reversal of TS.
Another drawback is that while the focus of DTW is on discrete time it can be formulated in continuous time (so-called Fr\'echet distance) but the computation scales with quadratic complexity in the number of sequence entries which makes it too expensive for many sources of high-frequency data, whereas our distance can be computed in linear time for the price of higher complexity in the state space dimension.
Ultimately the reason for this increase in efficiency is that the time-warpings are never explicitly computed or exhibited. 

Another area that is directly related is kernel learning.
Any kernel $k:\cX \times \cX \to [0,\infty)$ with reproducing kernel Hilbert space $H_k$ induces a scoring rule by setting $s(x,\mu)\coloneqq \| \delta_x - \mu\|_{H_k}^2$ where $\|\mu\|_{H_k}^2 = \int \int k(x,y) \mu(dx)\mu(dx)$, see \cite[Section 5]{Gneiting2007StrictlyPS} and several kernels for sequences have been developed in the literature.
Most relevant to our approach is the ``signature kernel'' introduced in \cite{2016arXiv160108169K}. 
However, for any scoring rule given by a kernel, the (generalized) divergence becomes simply the maximum mean discrepancy and the entropy simply the variance in the RKHS $H_k$. 
While kernels give rise to a powerful class of scoring rules, the success and popularity of non-kernel based scoring rules on $\cX=\R^n$ is motivation enough to look for other interesting, non-linear scoring rules.

The technical key to our approach comes from mathematics where iterated integrals, so-called signatures, and non-commutative algebras are use to represent paths.
This goes at least back to seminal work of Chen~\cite{chen-58} in algebraic topology and subsequent applications in control theory \cite{fliess1981fonctionnelles, RogerW1976167} and more recently rough path theory \cite{MR2314753}.
These results have been influential in stochastic analysis \cite{lyons-qian-02,friz2014course} and only more recently have been started to be explored in a statistical and machine learning context.     
We mention pars-pro-toto \cite{papavasiliou2011parameter,ChevyrevOberhauser18, distrReg} for inference about laws of stochastic processes; \cite{2016arXiv160108169K,salvi2021higher,fermanian2021framing} for kernel learning; \cite{toth2020bayesian,dyer2021approximate,lemercier2021siggpde} for Bayesian approaches; \cite{ni2020conditional,buhler2020datadriven,kidger2021neural} for generative modelling; \cite{giusti2021signatures,lee2020path,Chevyrev2018PersistencePA} for applications in topology; \cite{diehl2020generalized,Diehl_2020,toth2021seqtens} for algebraic perspectives.

Finally, we mention that the two topics that are central to us -- invariances and non-Euclidean structure -- have been considered in different contexts in scoring rule frameworks.
For example, \cite{Fissler2017OrderSensitivityAE} studies equi- and in-variances for scoring rules for Euclidean data; non-vector valued data such as sets, contours, intervals, and quantiles have received attention \cite{bolin2015excursion,molchanov2005theory, fissler2021forecast}.

  \section{Proper Scoring Rules, Entropies, and Divergences} \label{sec:scoring}
  We briefly recall general background on scoring rules following closely the notation in \cite{dawid2007geometry}, see also  \cite{gneiting2007strictly,Dawid2014TheoryAA,Grunwaldidid2004,GKMO2018}. 
  Let $ \cX $ be a measurable space (\emph{the outcome space}), $ \cA $ be a set (\emph{the action space}), and $ \Le : \cX \times \cA \to \R $ be a function  (\emph{the loss function}).
  Further, let $X$ be a $\cX$-valued random variable.
  We consider the following game between a Decision-maker and Nature: the task of the decision maker is to choose an action $a \in \cA$, after which Nature
  reveals the outcome $x \in \cX$ that is given by sampling $X$.
  The decision maker then suffers the loss $\Le(x,a)$.
 
  Given a set $\cP$ of probability measures on $\cX$, a principled probabilistic (Bayesian) approach to this decision problem is to proceed as follows:

	\begin{enumerate}[(I)] 
  \item \label{step: bayes act}Associate with every $\mu \in \cP$ its \emph{Bayes act} $ a_\mu \in \cA $ defined by
		\begin{align}
      a_\mu \definedas \arg\min_{a\in\cA} \E_{X\sim \mu}[ \Le(X,a)]
		\end{align}
    (assuming that a minimum exists; if it is not unique, choose $a_\mu$ arbitrary among the minimizers).
		\item\label{step: scoring rule} Use the Bayes act $a_\mu$ to define the \emph{scoring rule} $s:\cX \times \cP \to \R$ on $ \cX $ given by 
		\begin{align}\label{def:scoring rule}
		s(x,\mu) \definedas \Le(x,a_\mu).
		\end{align}
  \item \label{step: entropy}Use the scoring rule $s$ to define the (generalised) \emph{entropy} $H$, the (generalised) \emph{divergence} $d: \cP \times \cP \to \R$, and the (generalised) \emph{mutual information} $I: \cP \times \mathcal{Q} \to \R$ as
    \begin{align}
      H: \cP &\to \R,\quad      \mu \mapsto \E_{X\sim \mu}[ s(X,\mu)],\\
      d: \cP \times \cP &\to \R,\quad     (\nu,\mu) \mapsto \E_{X\sim \nu}[ S(X,\mu)]-H(\nu),\\
      I: \cP \times \mathcal{Q} &\to \R,\quad   (\mu,\nu)  \mapsto H(\mu) - \E_{U \sim \nu}[ H(\mu|U)].
    \end{align}
    where $\mathcal{Q}$ denotes a space of probability measures (not necessarily on $\cX$) and $\mu|U$ denotes the law of $\mu$ conditioned on the random variable $U$. 
	\end{enumerate}%
  The above definitions and nomenclature is justified as follows:
  firstly, it is an instructive exercise to check that for standard choices of state space $\cX$, action space $\cA$, and loss function $\Le$, the above reduce to classical definitions of entropy, divergence and mutual information (e.g.~if $\cA$ is the set of densities on $\R^n$ then the \emph{log score} $L(x,a) \coloneqq -\log a(x)$ from \cite{good52} yields the the usual Shannon entropy, Kullback-Leibler divergence, and mutual information); see~\cite{dawid2007geometry} for more examples.
  Secondly, Theorem \ref{thm:scoring} below shows that characteristic properties hold in the full generality of the above setup: 
  \begin{theorem}[\cite{dawid2007geometry}]\label{thm:scoring}
    Let $\cX$, $\cA$, and $L$ be as above.
    Further, denote with $H$, $I$, and $d$ the associated (generalized) scoring rule, entropy, and divergence.
    Then
    \begin{enumerate}
		\item\label{itm:proper} the scoring rule \eqref{def:scoring rule} is \emph{proper}, that is
      \begin{align}
       \mu \mapsto \E_{X \sim \nu}[s(X,\mu)] 
      \end{align}
      is minimized at $\mu=\nu$.
		\item $ \mu \mapsto H(\mu) $ is concave,
		\item $(\nu, \mu) \mapsto \E_{X \sim \nu}[s(X,\mu)] $ is affine in $ \nu $ for every $\mu$,
		\item $ d(\mu,\nu) \geq 0 $ with equality for $ \mu=\nu $,
     \item $I(\mu,\nu) \ge 0$ with equality for $\mu \indep \nu$.
	\end{enumerate}
\end{theorem}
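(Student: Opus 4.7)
The plan is to derive all five items directly from the two identities $s(x,\mu) = L(x,a_\mu)$ and $H(\mu) = \inf_{a \in \cA} \E_{X\sim\mu}[L(X,a)]$. Nothing beyond elementary manipulations and Jensen's inequality should be required; the core observation is that each conclusion flows from the optimality built into the Bayes-act $a_\mu$.

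I would start with item 1 (propriety), which is immediate from the definition of $a_\nu$ as a minimizer: for any $\mu,\nu \in \cP$,
\[
\E_{X\sim\nu}[s(X,\mu)] = \E_{X\sim\nu}[L(X,a_\mu)] \ge \E_{X\sim\nu}[L(X,a_\nu)] = \E_{X\sim\nu}[s(X,\nu)],
\]
so the map is minimized at $\mu = \nu$. Item 3 is then immediate: for fixed $\mu$, $s(\cdot,\mu) = L(\cdot,a_\mu)$ is a fixed measurable function on $\cX$, hence $\nu \mapsto \int s(x,\mu)\,\nu(dx)$ is linear (a fortiori affine) in $\nu$. Item 4 is a one-line restatement of item 1, since $d(\mu,\nu) = \E_{X\sim\mu}[s(X,\nu)] - \E_{X\sim\mu}[s(X,\mu)] \ge 0$ by propriety, with equality trivially when $\mu=\nu$.

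For concavity of $H$ (item 2) I would rewrite $H(\mu) = \inf_{a \in \cA} \E_{X\sim\mu}[L(X,a)]$; each map $\mu \mapsto \E_{X\sim\mu}[L(X,a)]$ is linear in $\mu$, so $H$ is a pointwise infimum of affine functions on $\cP$ and hence concave. For item 5, the tower property gives $\mu = \E_{U\sim\nu}[\mu|U]$ realised as a $\nu$-average of conditional laws; applying Jensen's inequality to the concave functional $H$ yields $H(\mu) \ge \E_{U\sim\nu}[H(\mu|U)]$, i.e.\ $I(\mu,\nu) \ge 0$. If $\mu \indep \nu$ then $\mu|U = \mu$ almost surely, so $H(\mu|U) = H(\mu)$ and the inequality becomes equality.

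I foresee no serious obstacle: the argument is essentially bookkeeping once the variational characterization of $H$ as an infimum of affine functions is in hand. The only subtle point is the existence of the minimizer $a_\mu$ required for $s$ to be well-defined, but the statement already allows this to be resolved by picking an arbitrary element from the set of minimizers; should the infimum fail to be attained one could work with an $\varepsilon$-minimizer and pass to the limit, which does not affect any of items 1--5.
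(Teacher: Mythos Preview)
Your proof is correct and follows the standard route found in the literature (and in the reference the paper cites). The paper itself does not supply a proof of this theorem: it states the result with attribution to \cite{dawid2007geometry} and moves on, treating it as background. So there is no in-paper argument to compare against; your derivation---propriety from the minimizing property of $a_\nu$, affinity from linearity of integration against a fixed integrand, concavity of $H$ as an infimum of affine functionals, nonnegativity of $d$ as a restatement of propriety, and nonnegativity of $I$ via Jensen applied to the concave $H$---is exactly the argument one expects and is essentially how the cited source proceeds.

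One minor bookkeeping point: in item 4 you wrote $d(\mu,\nu)$ with the arguments in the order opposite to the paper's definition $d(\nu,\mu) = \E_{X\sim\nu}[s(X,\mu)] - H(\nu)$; this is harmless since the inequality and the equality case are symmetric in how you stated them, but it is worth aligning the notation when you write it up.
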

Different applications areas demand different scoring rules.
Classical choices for the Euclidean case $\cX=\R^n$ are besides the already mentioned log score, the Brier score, the Tsallis score, the Bregman score, the Hyv\"arinen score, etc.; see~\cite{dawid2007geometry} for details. 
The aim of the remainder of this article is to study the case when $\cX$ is a space of paths or a space of equivalence classes of paths (under reparametrisation).  
  \paragraph{A Toy Example: From Feature Maps to Bayes Actions.}
  To motivate our scoring rule for paths let us first revisit the vector-valued case, $\cX= \R^n$.
  To arrive at a proper scoring rule, the space of Bayes actions $\cA$ should be large enough to characterize any (sufficiently nice) probability measures on $\cX=\R^n$.
  A classic way to characterize a probability measure, is to consider the sequence of moments,
  \begin{align}\label{eq:moments}
    (1, \E[X], \E[X^{\otimes 2}], \E[X^{\otimes 3}],\ldots) 
  \end{align}
  that is, $\E[X]$ is the mean vector, $\EE[X^{\otimes 2}]$ is the covariance matrix, etc. 
  (We tacitly assume that the sequence of moments is well-defined and decays quickly enough so that it characterizes the measure, see Remark \ref{rem:moments characteristic}).
  The sequence \eqref{eq:moments} is an element of the set
  \begin{align}\label{eq:define tensor algebra}
    \Hopf \coloneqq \prod_{m \ge 0} (\R^n)^{\otimes m}
  \end{align}
  of sequences of tensor of increasing degree $m$.
  In fact, this set $\Hopf$ forms a vector space by element-wise addition of tensors of the same degree.
  If we define the ``feature map''
  \begin{align}
    \varphi: \R^n \to \Hopf, \quad x \mapsto (1, x, x^{\otimes 2}, \ldots).
  \end{align}
  With the above notation, the moment sequence \eqref{eq:moments} is simply the mean of $\varphi(X)$,
  \begin{align}\label{eq:expected feature}
    \E[\varphi(X)] =    (1, \E[X], \E[X^{\otimes 2}], \E[X^{\otimes 3}],\ldots) \in \Hopf
  \end{align}
 Using a well-known characterization of the mean as minimizer of a quadratic 
  we can introduce the loss function
  \begin{align}
   \Le(x,a) \coloneqq \| \varphi(X) - a\|^2  
  \end{align}
  which associates with a probability measure $\mu$ on $\R^n$ the Bayes action
  \begin{align}
    a_\mu \equiv \E[\varphi(X)] \equiv \underset{m \in \Hopf}{\operatorname{argmin}} \E[ \Le(X,a)].
  \end{align}
  It follows from general principles that the resulting scoring rule on the state $\cX=\R^n$ and action space $\cA = \Hopf$,
  \begin{align}
   s(x,\mu) = \Le(x,a_\mu) 
  \end{align}
  is proper. 
  Despite the elementary nature of this example it gives us a simple way to associate with any ``feature map'' a Bayes action and a scoring rule and already simple variations lead to interesting questions: for example, if the quadratic loss function is replaced by the absolute value, one ends with medians of moment as Bayes action and many other choices are possible.
  Such questions fall under the framework of ``elicitation'' of properties of probability measures with scoring rules which is an active research area, already in the classical vector-valued (even scalar) case; see \cite{savage1971elicitation} and \cite{Abernethy2012ACO,Steinwart2014ElicitationAI,Frongillo2015VectorValuedPE} for some of the recent advances.
  \begin{remark}\label{rem:moments characteristic}
    The question which probability measures on $\R^n$ are characterized by the moment sequence \eqref{eq:moments} is classical but quite subtle in general.
    But for compactly supported measures this trivially holds.
    Our focus will soon shift to probability measures on pathspace but since spaces of paths are generically not even locally compact, compactness is a too strong assumption; in fact, important examples of measures on pathspace such as geometric Brownian motion are not characterized by their ``signature moments'' that we will use in Section~\ref{sec: structure of tracks} and Section~\ref{sec:scoring tracks}.  
    However, one can replace the moment sequence~\eqref{eq:moments} by a normalized moment sequence that characterizes any probability measure on $\R^n$ and this extends to path space and signature moments, see \cite{ChevyrevOberhauser18} for details. 
    Hence, for simplicity, we assume throughout that the probability measures are characterized by their expected feature map (since this is possible by a slight modification of the feature map). 
  \end{remark}
  \section{Structure of the Space of (Unparametrized) Paths}\label{sec: structure of tracks}	

  We review classic mathematical results about spaces of paths going back to seminal work of Chen \cite{chen-58}. 
  The main result is the existence of a \emph{``feature map''}
  \begin{align}\label{eq:signature}
  \Phi: \Tracks \to \Hopf,\,\quad\bx \mapsto \Phi(\bx) 
  \end{align}
  that has as domain the set $\Tracks$ that consist of equivalence classes of paths that evolve in $\R^n$, and as co-domain the linear space $\Hopf$.
 
  We already encountered $\Hopf$ in Section~\ref{sec:scoring} where it arose as the vector space of sequences of tensors $\bt_m \in (\R^n)^{\otimes m}$ of increasing degree $m$, see \eqref{eq:define tensor algebra}.
  However, $\Hopf$ is not only a vector space but a so-called Hopf algebra: we can multiply elements of $\Hopf$ and take the ``inverse'' of elements of $\Hopf$. 
  One of the well-known and attractive properties of the map~\eqref{eq:signature} is that these two algebraic operations (multiplication and inversion) capture the natural operations on $\Tracks$ (conatentation and reversal). 
  Exploiting this correspondence will be essential for our main results in Section~\ref{sec:scoring tracks}.  
  \paragraph{The Domain of Paths.}
  A bounded variation path\footnote{For simplicity we focus on (equivalence classes of) bounded variation paths but all the results immediately extend to paths with much less regularity such as trajectories of stochastic differential equations or (fractional) Brownian motion by replacing the iterated Riemann--Stieltjes integrals by stochastic integrals or rough path integrals \cite{MR2314753}} $\bx$ in $\R^n$ is a continuous map 
  \begin{align}
    \bx : [0,T] \to \R^n\, \text{ such that } \|\bx\| \coloneqq \sup_{(t_1,\ldots,t_L): 0 \le t_1 < \ldots < t_L < T} \sum  \| \bx(t_{i+1})- \bx(t_i) \| < \infty. 
  \end{align}
  For $a,b \in \R^n$ we denote with $\paths{a}{b}$ the set of all continuous bounded variation paths that start at $a$ and end in $b$,
  \begin{align}
    \paths{a}{b} \coloneqq \{\bx | \bx:[0,T] \to \R^n\text{ is of bounded variation, } T>0, \bx(0)=a, \bx(T)=b\}. 
  \end{align}
  and by
  \begin{align}
   \Paths \coloneqq \bigcup_{a,b \in \R^n} \paths{a}{b} 
  \end{align}
  the set of all bounded variation paths in $\R^n$.
  Although $\Paths$  is not a linear space, it has a rich structure given by concatenation and time reversal.
  Informally, this says that if one can go from $a$ to $b$ and from $b$ to $c$ then one can go from $a$ to $c$ and that if one can go from $a$ to $b$ then one can go from $b$ to $a$.
  Formally, concatenation and reversal are defined as 
  \begin{enumerate}
  \item For $\bx \in \paths{a}{b}$, $\by \in \paths{b}{c}$, their \emph{concatenation} $\bx \star \by \in \paths{a}{c}$ is defined by 
    \begin{align}
      (\bx \star \by)(t) \coloneqq
      \begin{cases}
        \bx(t) \text{, if } t \in [a,b]\\ 
        \bx(b)-\by(b)+\by(t) \text{, if } t \in [b,b + c]\\ 
      \end{cases}
    \end{align}
  \item for any $ \bx\in \paths{a}{b}$ there exists an \emph{inverse} path $\overleftarrow \bx \in \paths{b}{a}$ defined as
    \begin{align}
      \overleftarrow\bx(t) \coloneqq \bx( b - t).
    \end{align}
  \end{enumerate}
 \paragraph{The Domain of Tracks} 
  As discussed above, often we want to ignore the time parametrization, hence the fundamental object we care about is not the set of paths but equivalence classes of paths.
  It turns out that is useful to work with slightly more general equivalence relation, namely that of \emph{tree-like equivalence} $\sim$.
    We define
  \begin{align}
    \tracks{a}{b} \coloneqq \paths{a}{b} / {\sim} \text{, and } \Tracks\coloneqq \Paths / {\sim} 
  \end{align}
  With slight abuse of notation, we use the same notation $\bx$ for an element of $\Paths$ and an element of $\Tracks$ but emphasize that an element $\bx \in \Tracks$ is a whole equivalence class of paths. 
  We give the precise definition of the equivalence relation $\sim$ in Appendix \ref{app:TLE} and only note here that if two paths $\bx:[0,T] \to \R^n,\by:[0,S] \to \R^n$ differ by time-parametrization, that is $\bx(t)=\by(\varphi(t))$ for every $t$ and an increasing function $\varphi:[0,T] \to [0,S]$, then $\bx \sim \by$.
  However, in addition to time parametrization, tree-like equivalence also identifies paths that backtrack all their excursions, see Appendix \ref{app:TLE}.
  We invite readers to think of elements of $\Tracks$ like animal tracks in nature: they provide shape and direction but not the speed at which the track was made.
  In particular, we note that the above operations of concatenation and reversal are well-defined for the elements of $\Tracks$; after all, they do not depend on the time-parametrization. 
  So again, with slight abuse of notation we have concatenation and reversal map,
  \begin{align}
    \star : \tracks{a}{b} \times \tracks{b}{c} \to \tracks{a}{c} \text{ and } \overleftarrow{\bullet}: \tracks{a}{b} \to \tracks{b}{a}.
  \end{align}

\paragraph{The co-domain $\Hopf$.} 
Our first encounter of $\Hopf$ was in Section~\ref{sec:scoring} as the state space of the moment map~\eqref{eq:define tensor algebra}.
However, a more abstract way to introduce is by identifying it as the free algebra over $\R^n$.
Informally, this means we want to keep the vector space structure of $\R^n$ but we also would like to have a multiplication and do this in the most general way possible.
Formally, this means $\Hopf$ is the free algebra over $\R^n$.
Despite this abstract characterization as a free object, the space $\Hopf_n$ has a very concrete form which we will take as its definition,
\begin{align}
  \Hopf \coloneqq \prod_{m \geq 0} (\R^n)^{\otimes m}  \coloneqq \{ \bt=(\bt_0,\bt_1, \bt_2,\bt_3, \ldots) : \bt_m \in (\R^n)^{\otimes m} \}.
\end{align}
(one can then directly verify that this indeed is the free algebra, see~\cite{reutenauer-93}. 
That is, an element $\bt$ of $\Hopf$ is sequence of tensors $(\bt_0,\bt_1,\bt_2,\ldots)$ of increasing degree $m$ where by convention $(\R^n)^{\otimes 0} = \R $.
The vector space structure of $\Hopf$ is simply given as element-wise addition: addition of $\bs,\bt \in \Hopf$ is defined as 
\begin{align}
  \bs + \bt = (\bs_0 + \bt_0, \bs_1 + \bt_1, \ldots)  
\end{align}
and their multiplication is defined by $ (\bs \cdot \bt)_m = \sum_k \bs_k \otimes \bt_{m-k} $, i.e
\begin{align} \label{eq:multiplication}
  \bs \cdot \bt \coloneqq (1, \bs_1+ \bt_1, \bs_2 + \bs_1 \otimes \bt_1 + \bt_2, \ldots)
\end{align}
where $\otimes$ denotes the usual tensor (outer) product.
Like matrix multiplication, this multiplication is associative but in general not commutative, $\bs \cdot \bt \neq \bt \cdot \bs$ and it has as multiplicative unit $(1,0,\ldots) \in \Hopf$,
\begin{align}
 \bt \cdot (1,0,0,\ldots) = (1,0,\ldots) \cdot \bt = \bt.
\end{align}
The existence of a unit for multiplication naturally leads to the question of the existence of inverses, that is for $\bt \in \Hopf$ can one find another element in $\Hopf$, denoted by $\bt^{-1} \in \Hopf$, such that 
\begin{align}
  \bt \cdot \bt^{-1} = \bt^{-1} \cdot \bt = (1,0,0,\ldots).
\end{align}
This is true whenever $ \bt_0 \not= 0 $, and moreover, $ \bt \mapsto \bt^{-1} $ has the explicit formula
\begin{align}
\bt^{-1} = \frac1{\bt_0} \Big\{\sum_{m\geq 0} (1-\frac1{\bt_0}\bt)^{\otimes m} \Big\}.
\end{align}

\paragraph{The Feature map $\Phi:\Tracks \to \Hopf$.}
\begin{definition}\label{def:iterated integrals}
  For $\bx \in \Paths$, $\bx:[0,T] \to \R^n$ define 
  \begin{align}
   \int d\bx^{\otimes m} \coloneqq  \int_{0 \le t_1 <\cdots < t_m \le T} d\bx(t_1) \otimes \cdots d \bx(t_m) = \int \dot \bx(t_1) \otimes \cdots \dot \bx(t_m) dt_1 \cdots dt_m. 
  \end{align}
\end{definition}
It is known that if two paths $\bx,\by \in \Paths$ are tree-like equivalent, $\bx \sim \by$, then $\int d\bx^{\otimes m}= \int d\by^{\otimes m}$ for every $m \ge0$, see~\cite{MR2630037}.
In fact, for the case of reparametrization $\bx(t)=\by(\tau(t))$ this follows immediately from the change of variables formula.
With slight abuse of notation we now define $\int d \bx^{\otimes m}$ for $\bx\in \Tracks$.
\begin{definition} For $\bx \in \Tracks$, define 
  \begin{align}
    \int d\bx^{\otimes m} \coloneqq  \int d\bx^{\otimes m}_{\text{path}} %
  \end{align}
where $\bx_{\text{path}} \in \Paths$ is in the equivalence class of $\bx$ and $\int d\bx^{\otimes m}_{\text{path}}$ is as in Definition~\ref{def:iterated integrals}.  
\end{definition}
By~\cite{MR2630037} $\int d\bx^{\otimes m}$ for $\bx \in \Tracks$ is well-defined in the sense that the choice of $\bx_{\text{path}}$ does not matter. 
We refer to the resulting map as the signature map (this is also known as the Chen--Fliess series or chronological exponential).
\begin{definition}
  We call
  \begin{align}
   \Phi: \Tracks \to \Hopf,\quad \bx \mapsto \left( \int d\bx^{\otimes m} \right)_{m \ge 0} 
  \end{align}
  the signature map.
\end{definition}
A well-known key property of the map $\Phi$ is that %
concatenation and reversal in $\Tracks$ correspond to multiplication and inversion in $\Hopf$.
Further, the map $\Phi$ is universal (up to fixing the starting point of the track, which is why we fix the starting point $a \in \R^n$ and restrict to the domain $\bigcup_{b\in\R^n} \Tracks(a,b)$) in the sense that it linearizes continuous functions on $\Tracks$.
We summarize all this in Theorem~\ref{thm:signature} below.

 \begin{theorem}\label{thm:signature}
   For every $a \in \R^n$ the map
   \begin{align}
     \Phi: \bigcup_{b \in \R^n}\tracks{a}{b}\to \Hopf , \quad \bx \mapsto \left(\int d\bx^{\otimes m}\right)_{m \ge 0} 
   \end{align}
   is injective and 
   \begin{enumerate}
   \item \label{itm:concat2}
           $\Phi(\bx\star \by) = \Phi(\bx) \cdot \Phi(\by)$
 \item\label{itm:inverse} 
           $\Phi(\overleftarrow \bx) = \Phi(\bx)^{-1}$
         \item \label{itm:universal}for every $f \in C(\Tracks,\R)$, $\epsilon>0$ there exists a linear functional $\ell \in \Hopf^\star$ such that 
           \begin{align}
             | f(\bx) - \langle \ell, \Phi(\bx) \rangle| < \epsilon
           \end{align}
uniformly in $\bx$ on compacts.
         \end{enumerate}
 \end{theorem}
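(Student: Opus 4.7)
The plan is to prove the three claimed identities and injectivity in an order that lets each step feed into the next, with universality coming last by a Stone--Weierstrass argument.

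First I would establish the concatenation identity \ref{itm:concat2} directly from the definition of iterated integrals, via \emph{Chen's identity}. Choose any path representatives $\bx:[0,T]\to\R^n$ and $\by:[0,S]\to\R^n$ of the tracks and consider the concatenation on $[0,T+S]$. The $m$-th iterated integral is an integral over the simplex $\{0\le t_1<\cdots<t_m\le T+S\}$. Partition this simplex according to the largest index $k$ for which $t_k\le T$: the first $k$ integration variables lie in the $\bx$-piece and the last $m-k$ lie in the $\by$-piece, and on the $\by$-piece the integrator is independent of the $\bx$-piece by the definition of $\star$. The two simplices factorize as a tensor product, giving $\int d(\bx\star \by)^{\otimes m}=\sum_{k=0}^m \int d\bx^{\otimes k}\otimes \int d\by^{\otimes (m-k)}$, which is precisely the $m$-th component of $\Phi(\bx)\cdot\Phi(\by)$ according to \eqref{eq:multiplication}. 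Since the iterated integrals are reparametrization-invariant (and tree-like invariant by the cited \cite{MR2630037}), this descends to $\Tracks$.

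The reversal identity \ref{itm:inverse} then follows essentially for free. The concatenation $\bx\star\overleftarrow\bx$ retraces itself and is therefore tree-like equivalent to the constant path at $a=\bx(0)$; hence as an element of $\Tracks$ it equals the unit track, whose signature is $(1,0,0,\ldots)$ (all iterated integrals of a constant path vanish for $m\ge 1$). Applying \ref{itm:concat2} to both orderings yields $\Phi(\bx)\cdot\Phi(\overleftarrow\bx)=\Phi(\overleftarrow\bx)\cdot\Phi(\bx)=(1,0,0,\ldots)$, so $\Phi(\overleftarrow\bx)$ is the (unique) multiplicative inverse of $\Phi(\bx)$ in $\Hopf$, which exists because $\Phi(\bx)_0=1\ne 0$.

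For injectivity on $\bigcup_{b}\tracks{a}{b}$, I would invoke the Hambly--Lyons uniqueness theorem \cite{MR2630037}: two bounded-variation paths starting at the same point have the same signature if and only if they are tree-like equivalent, i.e.\ represent the same element of $\Tracks$. Fixing the initial point $a$ removes the translation ambiguity (since $\Phi$ depends on $\bx$ only through its increments $d\bx$) and ensures injectivity.

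Finally, I would prove universality \ref{itm:universal} by Stone--Weierstrass. The coordinate functions $\bx\mapsto \langle\ell,\Phi(\bx)\rangle$ for $\ell\in\Hopf^\star$ form a linear subspace $\mathcal{V}\subset C(\Tracks,\R)$. The main point is that $\mathcal{V}$ is closed under pointwise multiplication: the \emph{shuffle identity} for iterated integrals states that for any two linear functionals $\ell_1,\ell_2\in\Hopf^\star$, the pointwise product $\langle\ell_1,\Phi(\bx)\rangle\langle\ell_2,\Phi(\bx)\rangle$ equals $\langle \ell_1\shuffle \ell_2,\Phi(\bx)\rangle$ for the shuffle $\ell_1\shuffle \ell_2\in\Hopf^\star$, a consequence of decomposing the product of two simplex integrals as a sum over shuffle permutations. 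Thus $\mathcal{V}$ is a subalgebra of $C(\Tracks,\R)$; it contains the constants (take $\ell$ to be the projection onto $(\R^n)^{\otimes 0}$) and it separates points by the injectivity established above. Stone--Weierstrass on any compact subset $K\subset\Tracks$ then yields density of $\mathcal{V}$ in $C(K,\R)$, which is the claim. I expect the main technical obstacle to be the injectivity step, which is really a citation to the non-trivial \cite{MR2630037}; the algebraic verifications (Chen's identity and the shuffle identity) are routine simplex manipulations, and once injectivity is in hand the Stone--Weierstrass argument is standard.
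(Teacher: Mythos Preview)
Your proposal is correct and is precisely the standard argument that the paper defers to by citation: the paper's own ``proof'' consists of two sentences pointing to Chen~\cite{chen-58} and Fliess~\cite{fliess1981fonctionnelles} for the algebraic identities and universality, and to Hambly--Lyons~\cite{MR2630037} for the tree-like equivalence underpinning injectivity. Your sketch unpacks exactly these ingredients---Chen's relation from the simplex decomposition, the reversal identity as a corollary of $\bx\star\overleftarrow\bx$ being tree-like null, injectivity from Hambly--Lyons, and universality via the shuffle product plus Stone--Weierstrass---so there is no substantive difference in route, only in level of detail; if anything, your write-up is more self-contained than the paper's.
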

 \begin{proof}
   This is a folk theorem in algebraic topology and control theory; see \cite{chen-58} and \cite{fliess1981fonctionnelles}. 
   What is less standard is that we use the treelike equivalence from $\sim$ from~\cite{MR2630037}. 
 \end{proof}
 \begin{remark}
   \label{rm:compute signature}
   The space $\Hopf \equiv \prod_{m \ge 0} (\R^d)^{\otimes m}$ is graded by the tensor degree $m$, and %
   $\Phi(\bx) \equiv (\int dx^{\otimes m})_{m \ge 0}$ decays exponentially fast in $ m $, that is
   \begin{align}
    \|{\int dx^{\otimes m}} \|\le \frac{\|\bx \|^m }{m!}
   \end{align}
   (on the right hand side $\|\bullet\|$ denotes the bounded variation (semi-)norm, on the left-hand side it denotes the norm on $(\R^d)^{\otimes m}$).  
  Hence, in practice one only needs to compute the first $m$ iterated integrals of $\Phi(\bx)$.  
  For piecewise linear tracks $\bx \in \Tracks$ -- which is how we identify time series -- the first $m$ entries of the map $\Phi(\bx)$ can be computed in $O(L d^M)$ computational steps: if a track is given by piecewise linear segments $v_1,\ldots,v_L \in \R^n$ then
   \begin{align} \label{eq:expsig}
     \left(\int d\bx^{\otimes m}\right)_{m \in 0,1,\ldots,M} = \exp(v_1) \cdots \exp(v_L), 
   \end{align}
   where $\exp(v) \coloneqq (1, v, \frac{v^{\otimes 2}}{2!}, \ldots) \in \Hopf$.
   Hence, for a low dimensional state space, the map $\Phi(\bx)$ can be approximately in time that scales linearly in the length of the path.
 \end{remark}

\paragraph{The Antipode in $\Hopf$.} 
The two operations of addition $\bs + \bt$ and multiplication $\bs \cdot \bt $ turn $\Hopf$ into a (non-commutative) algebra $(\Hopf, +, \cdot)$.
However, $ \Hopf $ comes with a bit more structure, namely 
the so-called \emph{antipode} map 
\begin{align}
  \alpha : \Hopf \to \Hopf 
\end{align}
which is defined as the linear function given by linear extensions of the map
\begin{align}
  (\R^n)^{\otimes m} \to (\R^n)^{\otimes m}, \quad v_1 \otimes \cdots v_m \otimes \mapsto  (-1)^m v_m \otimes \cdots \otimes v_1.  
\end{align}
There is an important subset $ G $ of $ \Hopf $ defined by the property
\begin{align}
G := \{ g \in \Hopf : \alpha(g) = g^{-1} \}.
\end{align} 
It turns out that $ G $ in fact forms a group and will play an important role in
our Bayes acts for the simple fact that the feature map $ \Phi $ takes values in
$ G $. We summaries this along with some facts about $ \alpha $ that we use later in the following lemma.
\begin{lemma} \label{lem:antipode}
  Let $\alpha$ be the antipode on $\Hopf$.
  Then
	\begin{enumerate}
		\item \label{itm:a1}$ \alpha^2 = 1 $,
		\item \label{itm:a2}$ \alpha(\bs \cdot \bt) = \alpha (\bt )\alpha (\bs) $,
 \item \label{itm:group}
      If $ \bx \in \Tracks $, then $ \Phi(\bx) \in G $,
		\item \label{itm:a5}For a power series $ p(\bt) = \sum_{m\geq 0} p_m\bt^{\otimes m} $, it holds that $ p\circ \alpha (\bt) = \alpha \circ p(\bt) $ for any $ \bt \in \Hopf $,
		\item \label{itm:invertible}
      If $ \bt \in \Hopf $ is invertible, then $\alpha(\bt^{-1}) = \alpha( \bt)^{-1} $,

		\item \label{itm:a4}Let $ f : \Hopf \to \R $ have the form 
		\begin{align}
		f(\bt) = \sum_m \sum_{i_1 \cdots i_m} f_m(\bt^{i_1 \cdots i_m}_m )
		\end{align}
   where we identify the degree-$m$ tensor $\bt_m \in (\R^d)^{\otimes m}$ of $\bt=(\bt_m)_{m \ge 0} \in \Hopf$ with its coordinates $\bt_m \simeq (\bt_m^{i_1,\ldots,i_m})_{i_1,\ldots,i_m=1,\ldots,d}$.
   If $ f_m(x) = f_m(-x) $ for every $ m\geq 0 $ and $x\in\R$, then
    \begin{align}
      f(\bt) = f(\alpha (\bt)).
    \end{align}
    	\end{enumerate}
\end{lemma}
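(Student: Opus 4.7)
The plan is to prove each item by reducing to a computation on pure tensors $v_1 \otimes \cdots \otimes v_m$ and extending by linearity and continuity. For (\ref{itm:a1}), applying $\alpha$ twice to such a tensor yields two order-reversals that cancel and two factors of $(-1)^m$ that also cancel, so $\alpha^2$ is the identity on a spanning set and hence on all of $\Hopf$. For (\ref{itm:a2}), the product $(v_1 \otimes \cdots \otimes v_k)\cdot(w_1 \otimes \cdots \otimes w_l)$ equals the single tensor $v_1 \otimes \cdots \otimes v_k \otimes w_1 \otimes \cdots \otimes w_l$ of degree $k+l$; applying $\alpha$ produces $(-1)^{k+l}\, w_l \otimes \cdots \otimes w_1 \otimes v_k \otimes \cdots \otimes v_1$, which factors exactly as $\alpha(\bt)\cdot\alpha(\bs)$.

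The real content is (\ref{itm:group}). The strategy is to show $\alpha(\Phi(\bx)) = \Phi(\overleftarrow{\bx})$; combined with Theorem~\ref{thm:signature}(\ref{itm:inverse}), which states $\Phi(\overleftarrow{\bx}) = \Phi(\bx)^{-1}$, this yields $\alpha(\Phi(\bx)) = \Phi(\bx)^{-1}$, i.e.\ $\Phi(\bx) \in G$. On a smooth representative $\bx : [0,T] \to \R^n$ one has $\overleftarrow{\bx}(t) = \bx(T-t)$, so $\dot{\overleftarrow{\bx}}(t) = -\dot{\bx}(T-t)$; substituting $s_i = T - t_i$ in the iterated integral definition of $\int d\overleftarrow{\bx}^{\otimes m}$ simultaneously reverses the order of the variables of integration and contributes a factor $(-1)^m$ from the chain rule on the $m$ differentials. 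This matches precisely the action of $\alpha$ on the degree-$m$ component of $\Phi(\bx)$, summed over $m \ge 0$.

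For (\ref{itm:a5}), induction on $m$ using (\ref{itm:a2}) gives $\alpha(\bt^{\otimes m}) = \alpha(\bt)^{\otimes m}$ (the antihomomorphism reverses the order of factors, but since all factors coincide this reversal is invisible), and linearity of $\alpha$ then yields $\alpha(p(\bt)) = \sum p_m \alpha(\bt)^{\otimes m} = p(\alpha(\bt))$. For (\ref{itm:invertible}), apply $\alpha$ to both $\bt \cdot \bt^{-1} = (1,0,\ldots)$ and $\bt^{-1} \cdot \bt = (1,0,\ldots)$; using (\ref{itm:a2}) and noting that $\alpha$ fixes the scalar unit, $\alpha(\bt^{-1})$ is immediately identified as the two-sided inverse of $\alpha(\bt)$. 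For (\ref{itm:a4}), the coordinate identity $\alpha(\bt)_m^{i_1\cdots i_m} = (-1)^m\, \bt_m^{i_m\cdots i_1}$ combined with the evenness of each $f_m$ absorbs the sign $(-1)^m$, and relabeling the summation indices $j_k = i_{m-k+1}$ restores the original sum, yielding $f(\alpha(\bt)) = f(\bt)$. The only genuinely non-symbolic step is (\ref{itm:group}); the remaining items are formal manipulations, and the main care needed is in ensuring the analytic change-of-variables in the iterated integral correctly reproduces the algebraic definition of $\alpha$ simultaneously on every tensor degree.
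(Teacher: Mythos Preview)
Your proof is correct. Two items differ from the paper's argument and are worth noting.

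For (\ref{itm:group}), the paper simply appeals to the literature (a reference to the group-like property of signatures), whereas you supply an explicit change-of-variables computation establishing $\alpha(\Phi(\bx)) = \Phi(\overleftarrow{\bx})$ and then invoke Theorem~\ref{thm:signature}(\ref{itm:inverse}). Your route is more self-contained, though note that Theorem~\ref{thm:signature} itself is stated in the paper only with a citation, so the dependency on external results is ultimately the same.

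For (\ref{itm:invertible}), the paper proceeds via the explicit power-series formula $\bt^{-1} = \bt_0^{-1}\sum_{n\ge 0}(1-\bt_0^{-1}\bt)^{\otimes n}$ and applies (\ref{itm:a5}) to commute $\alpha$ through the series. Your argument is more direct: apply $\alpha$ to the defining identities $\bt\cdot\bt^{-1} = \bt^{-1}\cdot\bt = (1,0,\ldots)$, use the antihomomorphism property (\ref{itm:a2}) and that $\alpha$ fixes the unit, and read off that $\alpha(\bt^{-1})$ is a two-sided inverse of $\alpha(\bt)$. This is cleaner and does not require knowing the explicit inverse formula; the paper's version, on the other hand, has the minor advantage of exhibiting (\ref{itm:invertible}) as a special case of (\ref{itm:a5}), tying the two items together structurally.
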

\begin{proof}
	Items \ref{itm:a1} and \ref{itm:a2} follow from the definition.
  Item \ref{itm:group} is well known, see for instance~\cite[Section
  5]{chevyrev2016characteristic}.
  To see item \ref{itm:a5} note that
	\begin{align}
	p\circ \alpha (x) 
	= \sum_{n\geq 0} p_n(\alpha x)^{\otimes n} 
	= \sum_{n\geq 0} p_n\alpha (x^{\otimes n}) 
	= \alpha(\sum_{n\geq 0} p_n x^{\otimes n})
	= \alpha \circ p(x),
	\end{align}
	by Item \ref{itm:a2} and linearity.
  Item \ref{itm:invertible} follows since the inverse map $ \bt \mapsto \bt^{-1} $ has the power series expansion
	\begin{align}
    \bt^{-1} = \frac1{\bt_0} \Big\{\sum_{n\geq 0} (1-\frac1{\bt_0}\bt)^{\otimes n} \Big\},
	\end{align}
	see \cite[Lemma 7.16]{friz-victoir-book} which together with Item \ref{itm:a5} shows the claim. 
  For item \ref{itm:a4}, we note that
	\begin{align}
	f(\alpha x) 
	= \sum_n \sum_{i_1 \cdots i_n} f_n((-1)^nx^{i_n \cdots i_1})
	= \sum_n \sum_{i_1 \cdots i_n} f_n(x^{i_n \cdots i_1})
	= f(x).
	\end{align}
\end{proof}

A simple example of a function that satisfies the requirements Item \ref{itm:a4} in Lemma~\ref{lem:antipode} is the sum of squares
\begin{align}
	\Lee(\bt) = \sum_m \sum_{i_1 \cdots i_m} \vert \bt^{i_1 \cdots i_m}_m \vert^2
\end{align}
Which will be used to construct a loss function for tracks in Section \ref{sec:experiments}.

\paragraph{From Discrete Time to Continuous Time.}	
This section has so far focused on paths [resp.~tracks], that evolves [resp.~equivalence classes of evolutions] in continuous time $\bx: [0,T] \to \R^n$.
However, in practice one typically has only access to a discrete time observations $\bx(t_1),\ldots,\bx(t_L) \in \R^n$ along some grid $0 \le t_1 < \cdots < t_L \le T$, that is a time series.
But any TS can be identified as the piecewise linear path
\begin{align} \label{eq:disctime}
 t \mapsto \bx(t_i) + \frac{t_i -t}{t_{i+1}-t_i} (\bx(t_{i+1})- \bx(t_i))  \text{ for } t \in [t_i,t_{i+1})
\end{align}
and hence also as an element of $\Tracks$ after forgetting the parametrisation.
Working in continuous time when the original data is discrete might look cumbersome and unnecessary at first sight but it has several advantages.
Firstly, all TS are embedded into the same space $\Paths$ respectively $\Tracks$, even if the sample grid $t_1< \cdots t_L$ varies from TS to TS, which would not be the case if one identifies TS as vectors.
Secondly, this automatically ensures consistency in terms of high-frequency limits when the grid gets finer, that is $\sup |t_{i+1}^m-t_i^m| \to 0$ as $m \to \infty$ for a sequence $(t_i^m)_{i=1,\ldots,L_m}$.
Finally, many popular models are naturally formulated in continuous time rather than discrete time. 
\paragraph{From Tracks to Paths.}
Our guiding philosophy is that the fundamental object is the set $\Tracks$ rather than the set $\Paths$ since the former allows to ignore the time parametrization; note that the set of time parametrisations is infinite-dimensional since every continuous function $\tau: [0,T] \to [0,T']$ can be used to reparametrize a path $t \mapsto \bx(t)$ to $t \mapsto \bx(\tau(t))$, hence working with $\Tracks$ factors out an infinite-dimensional class of invariances.
Nevertheless, for certain applications the parametrization matters -- at least to a certain degree.  
However, this can be easily addressed by adding time as an additional coordinate: to emphasize the dimension $n$ of the state space $\R^n$ in which the paths evolve we write $\Paths_n$ (instead of just $\Paths$ that we used until now); similarly $\Tracks_n$ for the set of equivalence classes of $\Paths_n$.
Given $\bx \in \Paths_n$ we embed
\begin{align}
 \Paths_n \hookrightarrow \Paths_{n+1}, \quad \bx \mapsto \left(t \mapsto (t, \bx(t))\right).
\end{align}
That is a path evolving in $\R^n$ is turned into a path in $\R^{n+1}$ by simply adding an additional coordinate that is time itself.
This makes the parametrization part of the ``shape'' of the trajectory which in turn is exactly the information that distinguishes tracks, hence 
\begin{align}
  \Paths_n \hookrightarrow\Tracks_{n+1} . 
\end{align}
This injection shows that any scoring rule for tracks induces a scoring rule for paths.

	\section{Scoring Rules For Tracks and Paths}
  \label{sec:scoring tracks}
  Motivated by the toy example in Section~\ref{sec:scoring} with the moment feature map $\varphi$ for data in $\R^n$, we now follow the analogous reasoning on the non-Euclidean space of tracks by using the feature map $\Phi$ instead of $\varphi$, note that
  neither the domain nor the image of $\Phi$ is a linear space as its image is the group $G$ that is embedded into the linear space $\Hopf$. Recall that in Section~\ref{sec: structure of tracks} we have seen that it is exactly the group structure that captures the structure of space of tracks of concatenation and reversal.
  This motivates us to
  \begin{align}
\text{replace the additive inverse in } \varphi(X)-m \text{ by the group inverse to get } \Phi(\bX)m^{-1}.
  \end{align}
  Our first main result Theorem~\ref{thm:scoring} shows that this indeed leads to a proper scoring rule on the space of tracks and operations on $\Tracks$ turn into algebraic operations in decision space.
 Consquences of this result are Proposition~\ref{prop:entropy on tracks} and Corollary~\ref{cor:entrev} which show how the associated entropy on the space of tracks is invariant to time-reversal and behaves under conditioning on the past. 

 \paragraph{A Scoring Rule for Tracks.}
We need to introduce an additional space $ \HopfInv $ wedged between $ \Hopf $ and $ G $ defined as the space of all elements $ \bt \in \Hopf $ starting with a one, formally
\begin{align}
\HopfInv = \{ \bt \in \Hopf : \bt_0 = 1 \}.
\end{align}
Unlike $ \Hopf $, $ \HopfInv $ is not a vector space or a Hopf algebra, but it
is a group like $ G $ while also being convex as a subset of $ \Hopf $ in addition to being topologically closed -- unlike the set of invertible elements of $ \Hopf $.
We have the following sequence of inclusions
\begin{align}
G \subseteq \HopfInv \subseteq \text{ invertible elements of } \Hopf \subseteq \Hopf. 
\end{align}
	
	\begin{definition} \label{def:BA}
   Let $ \Lee : \Hopf \to \R $ be convex with a unique minimum at the unit $(1,0,0,\ldots)$ of $ G $.
   Define the left loss function as 
   \begin{align}
    \Le^\lt : \Tracks \times \Hopf^\star \to [0,\infty),\quad (\bx, m) \mapsto L(m^{-1} \Phi(\bx)). 
   \end{align}
  Applying step~\eqref{step: bayes act} from the scoring rule framework of Section~\ref{sec:scoring}, the left Bayes' act is defined as 
  \begin{align}
		\lB_\mu \definedas \underset{m \in \HopfInv_n}{\operatorname{argmin}}\,\E[\Le^\lt(m,\bX)]. %
  \end{align}
  Applying step~\eqref{step: scoring rule} yields the proper scoring rule
  \begin{align}
    s^\lt(\bx,\mu) \coloneqq \E[\Le^\lt(\lB_\mu, \bx)]. 
  \end{align}
  Applying step~\eqref{step: entropy} yields the (generalised) entropy,
  divergence, and mutual information 
  \begin{align}
    H^\lt(\mu) &\definedas \E_{\bX\sim \mu} \Le^\lt(\lB_\mu, \bX)\\
    d^\lt(\nu,\mu) &\definedas \E_{\bX\sim \nu}\Big[ \Le^\lt(\lB_\mu, \bX)- \Le^\lt(\lB_\nu, \bX) \Big]\\
    I^\lt(\mu,\nu) & \coloneqq H(\mu) - \E_{U \sim \nu}[ H(\mu|U)]
	\end{align}
on the output space $\cX= \Tracks$ and the action space $\cA= \Hopf^\star$.
  Analogously we define the right loss function $\Le^\rt(\bx,m) \coloneqq L(\Phi(\bx)m^{-1})$, right Bayes act $\rB_\mu$ and right scoring rule $s^\rt(\bx,\mu)$ as well as right entropy, divergence, and mutual information.
\end{definition}
The scoring rule framework of Definition~\ref{def:BA} turns operations on tracks into algebraic operations in the decision space. 
	\begin{theorem}\label{thm:scoring tracks}
    Let the output space be $\cX= \Tracks$, the action space $\cA= \Hopf^\star$ and
    \begin{align}
      \Le^\lt : \Tracks \times \Hopf^\star \to [0,\infty) \text{ resp. } \Le^\rt : \Tracks \times \Hopf^\star \to [0,\infty)
    \end{align}
    the loss functions from Definition \ref{def:BA}.
    The following properties hold
		\begin{enumerate} %
			\item\label{itm:c1} If $ \Lee $ is coercive, that is $ L(\bt)\to \infty $ whenever $ \lVert \bt \rVert \to \infty $, then for any Borel measure $ \mu $ such that $ \Lee $ is $ \mu $-integrable, both $ \lB_\mu $ and $ \rB_\mu $ exist. If $ \Lee $ is strictly convex, then they are unique.
			\item\label{itm:c2} If $ \mu = \delta_\bx $ then $ \rB_\mu = \lB_\mu = \Phi(\bx) $
			\item\label{itm:concat} The Bayes' acts satisfy 
			\begin{align}
			\lB_{\nu \star \mu \vert \nu} &= \nu(\Phi)\lB_{\mu\vert\nu} \\
			\rB_{\mu\star\nu \vert \nu} &= \rB_{\mu\vert\nu} \nu(\Phi) 			
			\end{align}
			where $ \nu(\Phi) $ denotes the pushforward measure of $ \nu $ under $ \Phi $ and by $\mu\star \nu| \nu$ we denote the law of $\bX \star \bY |\bY$ where $\Law{\bX}= \mu$ and $\Law{\bY}=\nu$.
			\item \label{itm:reversal}If $ \Lee $ satisfies $ \Lee(\bt) = \Lee(\alpha( \bt)) $, then 
			\begin{align}
			\rB_{\overleftarrow\mu} = \alpha(\lB_\mu), \quad 
			\lB_{\overleftarrow\mu} = \alpha(\rB_\mu)
			\end{align} 
			where $ \overleftarrow\mu $ denotes the measure $ \mu $ given by running samples from $\mu$ backwards in time\footnote{Formally $\bX \sim \mu$ then $\overleftarrow \mu$ is defined as the law of $\overleftarrow \bX$.}
		\end{enumerate}
	\end{theorem}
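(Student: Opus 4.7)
The overall approach is to handle each of the four items as a consequence of the algebraic properties of the signature $\Phi$ from Theorem \ref{thm:signature} and of the antipode $\alpha$ from Lemma \ref{lem:antipode}, together with a suitable change of variables in the minimization defining $\lB$ and $\rB$. For Item \ref{itm:c1}, the key observation is that for every fixed $\bt\in\HopfInv$ the map $u \mapsto u \cdot \bt$ is affine (multiplication in $\Hopf$ is bilinear), so after reparametrizing via $u = m^{-1}$ -- a bijection on the group $\HopfInv$ -- the left objective becomes the convex functional $u \mapsto \E[\Lee(u\cdot\Phi(\bX))]$. Coercivity of $\Lee$ propagates because $\Phi(\bx)_0 = 1$ forces the degree-$m$ component of $u \cdot \Phi(\bx)$ to equal $u_m$ plus terms depending on $(u_k)_{k<m}$; existence of a minimizer then follows by the direct method, after a finite-degree truncation justified by the exponential decay of $\Phi$ noted in Remark \ref{rm:compute signature}, and strict convexity of $\Lee$ delivers uniqueness. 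For Item \ref{itm:c2}, the objective collapses to the deterministic $\Lee(m^{-1}\Phi(\bx))$, which by unique minimality of $\Lee$ at the unit forces $m^{-1}\Phi(\bx) = (1,0,\ldots)$, i.e.\ $\lB_{\delta_\bx} = \Phi(\bx)$; the right case is identical.

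For Item \ref{itm:concat}, the engine is the multiplicativity $\Phi(\bX\star\bY) = \Phi(\bX)\cdot\Phi(\bY)$ from Theorem \ref{thm:signature}(\ref{itm:concat2}). Conditioning on $\bY$, the right objective equals $\E_\bX[\Lee(\Phi(\bX)\Phi(\bY)m^{-1})]$, and the substitution $m = m'\cdot\Phi(\bY)$ -- well-defined since $\Phi(\bY)\in G$ is invertible -- reduces it to the defining problem for $\rB_{\mu\vert\nu}$; reversing the substitution gives $\rB_{\mu\star\nu\vert\nu} = \rB_{\mu\vert\nu}\cdot\Phi(\bY)$, which is the asserted identity with $\nu(\Phi)$ interpreted as the random element $\Phi(\bY)$ for $\bY\sim\nu$. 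The left case is symmetric using $m = \Phi(\bY)\cdot m'$. For Item \ref{itm:reversal}, Theorem \ref{thm:signature}(\ref{itm:inverse}) together with $\Phi(\bX)\in G$ (Lemma \ref{lem:antipode}(\ref{itm:group})) yield $\Phi(\overleftarrow\bX) = \Phi(\bX)^{-1} = \alpha(\Phi(\bX))$. Applying $\Lee = \Lee\circ\alpha$ and then parts (\ref{itm:a1}), (\ref{itm:a2}), (\ref{itm:invertible}) of Lemma \ref{lem:antipode} rewrites the integrand as $\Lee(\alpha(\Phi(\bX))m^{-1}) = \Lee(\alpha(\alpha(\Phi(\bX))m^{-1})) = \Lee(\alpha(m)^{-1}\Phi(\bX))$. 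Under the bijection $m' = \alpha(m)$ on $\HopfInv$ (valid because $\alpha^2 = \mathrm{id}$), the right minimization for $\overleftarrow\mu$ coincides with the left minimization for $\mu$, so $\rB_{\overleftarrow\mu} = \alpha(\lB_\mu)$, and the reverse identity follows by swapping the roles of left and right.

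The main obstacle I anticipate is Item \ref{itm:c1}: because $\Hopf$ is an infinite product of finite-dimensional tensor spaces, one must pin down a topology on which $\Lee$ is coercive and on which the minimization admits minimizers in $\HopfInv$. The natural route is to work in the truncated quotient $\Hopf_{\le M}$, in which convex coercive minimization is standard, and then pass to the limit $M\to\infty$ using the exponential decay of $\Phi(\bx)_m$ in $m$ from Remark \ref{rm:compute signature} to guarantee that minima stabilize. By contrast, each of Items \ref{itm:c2}--\ref{itm:reversal} is a short algebraic consequence once the correct left/right change of variables on the group $\HopfInv$ is in place.
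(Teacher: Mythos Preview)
Your treatment of Items \ref{itm:c2}--\ref{itm:reversal} is essentially the paper's: the same collapse to $\Lee$ evaluated at the unit for $\mu=\delta_\bx$, the same substitution $m \leftrightarrow m\,\Phi(\bY)$ (resp.\ $\Phi(\bY)\,m$) for the concatenation identity via Theorem~\ref{thm:signature}(\ref{itm:concat2}), and the same chain $\Lee(\alpha(\Phi(\bX))m^{-1}) = \Lee(\alpha(m)^{-1}\Phi(\bX))$ via Lemma~\ref{lem:antipode} for the reversal identity.

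For Item \ref{itm:c1} your route and the paper's diverge. The paper does not truncate. It equips $\Hopf$ with its $\ell^2$ norm so that it becomes a separable Hilbert space, passes to the inverse variable exactly as you do to obtain the convex functional $\psi(x)=\E_\mu[\Lee(\Phi(\bX)\,x)]$, and then runs the direct method in infinite dimensions: lower semicontinuity is Fatou, and a minimizer is extracted from weak compactness of closed balls. The coercivity of $\psi$ is obtained not from the unit-triangular structure of multiplication, but from the Banach-algebra inequality $\lVert \Phi(\bX)\,x\rVert \ge \lVert x\rVert / \lVert \Phi(\bX)^{-1}\rVert$, combined with inner regularity of $\mu$ (a Borel measure on a Polish space is Radon) to manufacture a compact set on which $\lVert \Phi(\bX)^{-1}\rVert$ is uniformly bounded.

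Your truncation-and-limit strategy is a plausible alternative, and in fixed finite degree your unit-triangular observation does deliver coercivity (e.g.\ via Jensen, since $u\mapsto u\cdot\E[\Phi(\bX)]$ is again unit-triangular and hence bijective). The step that is not justified is ``minima stabilize as $M\to\infty$ because of the factorial decay of $\Phi(\bx)_m$''. Remark~\ref{rm:compute signature} bounds $\Phi$ \emph{from above}; this controls the tails of the data but gives no reason for the truncated minimizers $(u^{(M)})_M$ to be bounded in any norm on $\Hopf$, let alone convergent. To close that gap you would need a coercivity bound that is uniform in $M$, which is exactly what the paper's Radon/Banach-algebra argument supplies in one stroke without ever truncating.
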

	\begin{proof} We give the proofs for the right Bayes' act as the proofs for the left Bayes' act is similar.
		
	(\ref{itm:c1})
	We equip $ \Hopf_n$ with its $ \ell^2 $ norm,
	\begin{align}
	\lVert \bt \rVert = \sqrt{ \sum_n \vert \bt_n \vert^2 }
	\end{align}
	which makes it into a separable Hilbert Space.
	
	Fix some measure $ \mu $ on $ \cX $ and define the map $ \psi : G_n \to \R $ by
	\begin{align}
	\psi(x) = \EE_\mu \Lee(\Phi(\bX)x).
	\end{align}
	We want to show that $ \psi $ is convex, coercive and lower semicontinuous on $ (\Hopf_n, \lVert \cdot \rVert) $, as this guarantees the existence of a minimiser. This is because the unit ball of $ (\Hopf_n, \lVert \cdot \rVert) $ is weakly compact, hence we could choose some weakly compact and convex set $ C $ such that $ \psi(x) > M $ outside of $ C $, and since $ \psi $ is lower semicontinuous it is also weakly lower semicontinuous, and therefore since it is convex it achieves a minimum on $ C $ which must be a global minimum. Note that its minimiser must be $ (\rB_\mu)^{-1} $. It follows that if $ \psi $ is strictly convex, then the minimiser is unique.
	
	Note that if $ \Lee $ is (strictly) convex, then
	\begin{align}
	\psi(\frac12x+\frac12y) 
	= \EE_\mu \Lee(\frac12\Phi(\bX)x+\frac12\Phi(\bX)y)
	\leq\!\!(<)\,\,\, \frac12\EE_\mu \Lee(\Phi(\bX)x)+\frac12\EE_\mu \Lee(\Phi(\bX)y) 
	= \frac12\psi(x) + \frac12\psi(y)
	\end{align} 
	hence $ \psi $ is also (strictly) convex.
	
	Note that $ ( \Hopf_n, \lVert \cdot \rVert ) $ is a Banach algebra, that is $ \lVert \bt \cdot \bs \rVert \leq \lVert \bt \rVert \cdot \lVert \bs \rVert $. By taking multiplicative inverses, this implies that
	\begin{align}
	\lVert \bt \cdot \bs \rVert \geq \frac1{\lVert \bs^{-1} \rVert}\lVert \bt \rVert.
	\end{align}
	for any invertible element $ \bs $. As $ \mu $ is Borel, and $ \Hopf_n $ is a Polish space, $ \mu $ is a Radon measure by \cite[Theorem 7.1.7]{Bogachev07} and we may choose a compact set $ C $ such that $ \mathcal P(\bX \notin C) \leq \varepsilon $, and define $ M $ to be $ \sup_{X \in C} \lVert \Phi(X)^{-1} \rVert $. 
	Then on $ C $, $ \lVert \Phi(X)x \rVert \geq \frac1M \lVert x \rVert $, and since 
	\begin{align}
	\psi(x) = \EE_\mu \Lee(\Phi(\bX)x) \geq \EE_{\mu\vert_C} \Lee(\Phi(\bX)x)
	\end{align}
	and $ \Lee $ is coercive, so is $ \psi $.
	
	To see that $ \psi $ is lower semicontinuous, note that for a sequence $ x_k \to x $ 
	\begin{align}
	\lim\inf_k \psi(x_k) 
	= \lim\inf_k \EE_\mu \Lee(\Phi(\bX)x_k) 
	\geq \EE_\mu \Lee(\Phi(\bX)x)
	= \psi(x)
	\end{align}
	by Fatous Lemma, the assertion follows. 
	
	(\ref{itm:c2})
	Since $ \Lee $ is minimised at the unit it is clear that for $ \mu = \delta_\bx $, $ \lB_\mu = \rB_\mu = \Phi(\bx) $ is optimal since $ (\lB_\mu)^{-1}\Phi(\bx) = \Phi(\bx)(\rB_\mu)^{-1} = 1 $.
	
	(\ref{itm:concat}) For $ \bY \sim \nu $ we have
	\begin{align}
	\rB_{\mu\star\nu \vert \nu} \definedas 
	&\underset{m \in \TAf{\R^n}}{\operatorname{argmin}}\E_{\bX\sim\mu}\big[\Lee(\Phi(\bX\star \bY)m^{-1}) \vert \bY \big] = \\
	&\underset{m \in \TAf{\R^n}}{\operatorname{argmin}}\E_{\bX\sim\mu}\big[\Lee(\Phi(\bX)\Phi(\bY)m^{-1}) \vert \bY \big] = \\
	&\underset{m\Phi(\bY)^{-1} \in \TAf{\R^n}}{\operatorname{argmin}}\E_{\bX\sim\mu}\big[\Lee(\Phi(\bX)m^{-1}) \vert \bY \big] \definedas
	\rB_{\mu \vert \bY} \Phi(\bY).
	\end{align}
	
	(\ref{itm:reversal}) 
	If $ \Lee $ satisfies $ \Lee(\bt) = \Lee(\alpha( \bt)) $, then
	\begin{align}
	\rB_{\overleftarrow\mu} \definedas 
	&\underset{m \in \TAf{\R^n}}{\operatorname{argmin}}\E_{\bX\sim \overleftarrow\mu}\Lee(\Phi(\bX)m^{-1}) =  \\
	&\underset{m \in \TAf{\R^n}}{\operatorname{argmin}}\E_{\bX\sim \mu}\Lee(\alpha(\Phi(\bX))m^{-1}) = \\
	&\underset{m \in \TAf{\R^n}}{\operatorname{argmin}}\E_{\bX\sim \mu}\Lee(\alpha(m^{-1})\Phi(\bX)) \definedas 
	\alpha(\lB_{\mu}).
	\end{align}
	\end{proof}
  \paragraph{Entropy, Divergence, and Mutual Information on the Space of Tracks.}
  We now focus on the (generalized) entropy, divergence and mutual information for probability measures on tracks that results from Definition~\ref{def:BA}.  

	\begin{proposition}\label{prop:entropy on tracks}
    For any two probability measures $\mu$ and $\nu$ on $ \Tracks$ it holds that %
		\begin{enumerate}
			\item \label{itm:b1}$ H^{\lt}(\mu \star \nu \vert \mu) = H^\lt(\nu \vert \mu) $ and $ H^{\rt}(\mu \star \nu \vert \nu) = H^\rt(\mu \vert \nu) $
			\item \label{itm:b2}If $ \Lee $ satisfies $ \Lee(\bt) = \Lee(\alpha \bt) $, then
			\begin{align}
			H^{\rt}(\mu) &= H^{\lt}(\overleftarrow\mu), \\
			d^{\rt}(\nu,\mu) &= d^{\lt}(\overleftarrow\nu,\overleftarrow\mu), \\
			I^{\rt}(\mu,\nu) &= I^{\lt}(\overleftarrow\mu,\nu)
			\end{align}
		\end{enumerate}
	\end{proposition}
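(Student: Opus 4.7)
The proof is a bookkeeping exercise combining the algebraic identities established in Theorem~\ref{thm:scoring tracks} (concatenation and reversal of Bayes acts), Theorem~\ref{thm:signature} (functoriality of $ \Phi $), and Lemma~\ref{lem:antipode} (antipode properties). For Item 1, I would apply the concatenation identity $ \rB_{\mu \star \nu \vert \nu} = \rB_{\mu \vert \nu}\,\nu(\Phi) $ from Theorem~\ref{thm:scoring tracks}\eqref{itm:concat} together with $ \Phi(\bX \star \bY) = \Phi(\bX)\Phi(\bY) $, and substitute into the definition of $ H^\rt(\mu \star \nu \vert \nu) $. Pointwise in the conditioning realization $ \by $ of $ \bY $, the factor $ \Phi(\by)\Phi(\by)^{-1} $ collapses inside the loss:
\begin{align*}
H^\rt(\mu \star \nu \vert \nu)(\by)
&= \E\bigl[L\bigl(\Phi(\bX)\Phi(\by)\Phi(\by)^{-1}\rB_{\mu \vert \nu}(\by)^{-1}\bigr) \bigm\vert \bY = \by\bigr] \\
&= H^\rt(\mu \vert \nu)(\by).
\end{align*}
The left version is identical after using $ \lB_{\mu \star \nu \vert \mu} = \mu(\Phi)\lB_{\nu \vert \mu} $, with the cancellation now occurring on the left of the loss argument.

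For the entropy identity in Item 2, I would expand $ H^\lt(\overleftarrow \mu) $ via the change of variables $ \bX \mapsto \overleftarrow \bX $ to write it as an expectation over $ \bX \sim \mu $, then run the following chain: (a) apply $ \lB_{\overleftarrow \mu} = \alpha(\rB_\mu) $ from Theorem~\ref{thm:scoring tracks}\eqref{itm:reversal}; (b) rewrite $ \Phi(\overleftarrow \bX) = \Phi(\bX)^{-1} = \alpha(\Phi(\bX)) $ using Theorem~\ref{thm:signature}\eqref{itm:inverse} for the first equality and $ \Phi(\bX) \in G $ from Lemma~\ref{lem:antipode}\eqref{itm:group} for the second; (c) use Lemma~\ref{lem:antipode}\eqref{itm:invertible} to move the inverse inside the antipode and Lemma~\ref{lem:antipode}\eqref{itm:a2} (anti-homomorphism) to fold two antipodes into one, so that the integrand becomes $ L(\alpha(\Phi(\bX)\rB_\mu^{-1})) $; (d) invoke $ L = L \circ \alpha $ to strip the antipode, recovering the integrand of $ H^\rt(\mu) $.

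For the divergence identity I would apply the same four-step chain to each of the two summands in $ d^\lt(\overleftarrow\nu, \overleftarrow\mu) $, noting that the second summand is $ H^\lt(\overleftarrow \nu) = H^\rt(\nu) $ by the entropy case already established. For the mutual information, the only extra ingredient is that reversal commutes with conditioning on the auxiliary variable $ U $: since $ U $ does not live in path space, $ \overleftarrow{\mu \vert u} = (\overleftarrow \mu) \vert u $, and the entropy identity transfers pointwise in $ u $ before integrating against $ \nu $. The main obstacle is really just bookkeeping: tracking the order reversal in the anti-homomorphism $ \alpha(\bs\bt) = \alpha(\bt)\alpha(\bs) $ is essential for the cancellation in step (c), and a slip there would destroy the whole argument.
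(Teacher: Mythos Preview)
Your proposal is correct and follows essentially the same approach as the paper: for Item~\ref{itm:b1} you use Theorem~\ref{thm:scoring tracks}\eqref{itm:concat} together with $\Phi(\bx\star\by)=\Phi(\bx)\Phi(\by)$ to produce the cancellation inside the loss, and for Item~\ref{itm:b2} you run the same antipode chain (apply $\lB_{\overleftarrow\mu}=\alpha(\rB_\mu)$, rewrite $\Phi(\overleftarrow\bX)=\alpha(\Phi(\bX))$, collapse via the anti-homomorphism, and strip $\alpha$ using $L=L\circ\alpha$), just read from the $H^\lt(\overleftarrow\mu)$ side rather than the $H^\rt(\mu)$ side. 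Your treatment is in fact slightly more careful than the paper's, which disposes of the divergence and mutual information identities with ``the other equalities follow''; your explicit remark that reversal commutes with conditioning on the auxiliary variable $U$ (since $U$ is not path-valued) is the right justification for the mutual information case.
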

	\begin{proof}
		(\ref{itm:b1}) \begin{align}
		H^{\lt}(\mu \star \nu \vert \mu) &= \E_{X \sim \mu \star \nu \vert \mu} \Lee((\lB_{\mu \star \nu \vert \mu})^{-1}\Phi(X)) \\
		&= \E_{X \sim \nu \vert \mu} \Lee((\lB_{\nu \vert \mu})^{-1}\nu(\Phi)^{-1}\nu(\Phi)\Phi(X))
		= H^{\lt}(\nu \vert \mu).
		\end{align}
		(\ref{itm:b2}) \begin{align}
		\E_{X \sim \nu} \Lee(\Phi(X)(\rB_{\mu})^{-1}) 
		= \E_{X \sim \nu} \Lee(\alpha (\rB_{\mu})^{-1}\alpha\Phi(X))
		= \E_{X \sim \overleftarrow\nu} \Lee((\lB_{\overleftarrow\mu})^{-1}\alpha\Phi(X)).
		\end{align}
		The other equalities follow.
	\end{proof}
	\begin{corollary} \label{cor:entrev}
		If $ \Lee $ satisfies $ \Lee(\bt) = \Lee(\alpha (\bt)) $ and a measure $ \mu $ is \emph{reversible}, that is, $ \mu $ and $ \overleftarrow\mu $ are equal up to their starting distribution, then
    \begin{align}
      H^{\rt}(\mu) = H^{\lt}(\mu).
    \end{align}
	\end{corollary}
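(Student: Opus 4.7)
The plan is to read off the corollary essentially as a direct consequence of item~\ref{itm:b2} of Proposition~\ref{prop:entropy on tracks}, once one makes the observation that the entropy $H^\lt$ is blind to the starting distribution of a track. Concretely, Proposition~\ref{prop:entropy on tracks} applied to a measure $\mu$ satisfying $L(\bt) = L(\alpha(\bt))$ already gives
\begin{align}
H^{\rt}(\mu) = H^{\lt}(\overleftarrow\mu),
\end{align}
so the only thing I need to establish is that $H^{\lt}(\overleftarrow\mu) = H^{\lt}(\mu)$ under the reversibility hypothesis, i.e.\ that $H^\lt$ is unchanged when one replaces a measure by another one agreeing with it up to the starting distribution.

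For the second step I would argue that the quantities entering the definition of $H^\lt$ factor through the pushforward $\mu(\Phi)$ on $\Hopf$. Indeed, by Definition~\ref{def:BA} the left Bayes act $\lB_\mu$ is the argmin of $m \mapsto \E_{\bX\sim\mu}[L(m^{-1}\Phi(\bX))]$, which only depends on $\mu$ through the law of $\Phi(\bX)$; the same is then true of $H^\lt(\mu) = \E_{\bX\sim\mu}[L((\lB_\mu)^{-1}\Phi(\bX))]$. Now recall that the signature is a functional of the increments of a path: for any constant $c \in \R^n$, $\Phi(\bx + c) = \Phi(\bx)$, since $\int d(\bx+c)^{\otimes m} = \int d\bx^{\otimes m}$. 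Therefore two probability measures on $\Tracks$ that differ only by translation of the starting point induce the \emph{same} pushforward on $\Hopf$, and hence the same $H^\lt$.

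Combining these two observations with the reversibility assumption $\overleftarrow\mu = \mu$ up to starting distribution yields
\begin{align}
H^{\rt}(\mu) \;=\; H^{\lt}(\overleftarrow\mu) \;=\; H^{\lt}(\mu),
\end{align}
as required.

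I expect the only mildly subtle point to be making the argument that $H^\lt$ depends only on $\mu(\Phi)$ precisely enough — this boils down to unpacking Definition~\ref{def:BA} and using that $\Phi$ is constant on translation-equivalence classes of paths in $\R^n$, so it is more bookkeeping than a genuine obstacle. No delicate analytic estimates are needed and all the work is already done by Proposition~\ref{prop:entropy on tracks}; the corollary really is a one-line application once one recognizes the translation-invariance of $\Phi$.
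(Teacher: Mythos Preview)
Your proposal is correct and follows precisely the route the paper intends: the paper states the result as an immediate corollary of Proposition~\ref{prop:entropy on tracks}(\ref{itm:b2}) without spelling out a proof, and your argument supplies exactly the missing detail, namely that $H^\lt$ depends on $\mu$ only through the pushforward $\mu(\Phi)$ and that $\Phi$ is invariant under translation of the starting point. There is nothing to add.
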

		In the experiments we will simulate sample paths from Brownian motion and since this is a reversible process it will not matter if we use the left- or right entropy.
  \begin{remark}\label{rem:flatness}
    An alternative to using the group structure in Definition~\ref{def:BA} of the Bayes act is to use that the group $G$ is embedded into the linear space $\Hopf$ and use this linear structure.
    That is, we define a Bayes act as $a_\mu \coloneqq \operatorname{argmin}_{m \in \TAf{\R^n}}\E_{\bX \sim \mu}[ \Lee(\Phi(\bX) - m)^2]$.
    It is easy to show that this gives a proper scoring rule and that $a_\mu = \E[\Phi(\bX)]$.
    However, this scoring rule relies on the embedding of the group into its ambient vector space and does not account for or respect the group structure.
    Moreover, the resulting divergence and entropy reduce to just the Euclidean distance and usual variance.
    The same remark extends to (signature) kernel based scoring, where linear methods are used in an RKHS; see the discussion about non-kernel based scoring in the introduction.     %
  \end{remark}
  \begin{remark}
    We identify a stochastic process as a path- or sequence-valued random variable, possibly even ignoring its parametrization.
    However, for some applications the filtration of a stochastic process matters and one could ask to extend the scoring rule framework to this.
    A kernel that captures the filtration was introduced in~\cite{bonnier2021adapted} and a kernel algorithm and new applications given in \cite{salvi2021higher}.
    To get a non-kernel scoring one could try to replace $\Phi$ in Definition \eqref{def:BA} by the higher-rank signature from~\cite{bonnier2021adapted}.
  \end{remark}
	\section{Gradient Descent on the Space of Tracks}
  \label{sec:gradients}

  Given a smooth function $f: \R^n \to \R$, the simplest update rule for gradient descent is  
	\begin{align}\label{eq:classic gradient}
	x_{i+1} = x_i -\eta \nabla f(x_i)
	\end{align}
  and under additional regularity of $f$, the resulting sequence $(x_i)_i \subset \R^n$ converges to a minimizer of $f$. 
  Our interest lies minimizing functions $f: \Tracks \to \R$.
  In accordance with our guiding theme we do not identify these domains as linear spaces where classical gradient descent can be applied. 
  However, we have seen that $\Phi$ provides an isomorphism between the space of tracks and the free group $G$ (up to forgetting the starting point of the track)
  \begin{align}
   \Tracks \simeq G 
  \end{align}
  Hence, the minimization problem of a function $f$ of tracks can be re-formulated as a minimization problem of a function $F=F(g) \coloneqq f \circ \Phi^{-1}(g)$ on the free group $G$.
  That is, the general problem we try to solve is to find 
  \begin{align}
   \operatorname{argmin}_{g \in G}F(g) \text{ for }F: G \to \R 
  \end{align}
  for any $F$ in class of sufficiently ``smooth'' real-valued functions on $G$. 

	There have been many attempts to generalize gradient descent to non-linear domains.
  Arguably the the case of Riemannian manifolds~\cite{Bonnabel2013} is the most well-developed among these.
  However, the group $G$ does not come with a Riemannian structure (to wit, only a Sub-Riemannian structure \cite{montgomery-02}). 
  We follow here a somewhat different approach inspired by work of Pierre Pansu \cite{pansu1989metriques} that directly uses the group structure to define gradients.
  We show that this gradient in turn allows us to give a straightforward generalization of the gradient update rule~\eqref{eq:classic gradient} from $\R^n$ to $G$, so that the resulting sequence $ (g_i)_i \subset G$ converges to the minimizer. 
\paragraph{Pansu Derivatives.}
The derivative $ D f (x)(\cdot)$ of a function 
\begin{align}
 f : \R^n \to \R 
\end{align}
at a point $x$ is a linear functional of $\R^n$ that can be defined as the limit 
\begin{align}
 D f(x)(h) \coloneqq \lim_{h \to 0} \frac{f(x+\lambda h) - f(x)}{\lambda}.
\end{align}
Identifying $\R^n$ as a the additive group $(\R^n, + )$ one can regard the difference quotient that appears in the limit as applying to the group operation to $x$ and $\lambda h$.
Hence, if we have also have a generalization of the multiplication with the scalar $\lambda$, then the above difference quotient makes sense for other groups than the additive group $(\R^n, +)$.
To formalize scalar multiplication, it turns out that the right notion is that of a Carnot group: a Carnot group is a Lie group $C$ that carries a left-invariant geodesic distance $\operatorname{dist}:C \times C \to [0,\infty)$ and for each $\lambda >0$ a bijection
\begin{align}
  \delta_\lambda : C \to C \text{ such that } \operatorname{dist}\left(\delta_{\lambda}(g), \delta_\lambda(h)\right) = \lambda \operatorname{dist}\left(g, h\right).
\end{align}
However, our focus is on the free group $G$ and here the scaling $\delta_\lambda$ by a scalar $\lambda>0$ has an explicit form
\begin{proposition}
  The group $G \subset \Hopf$ equipped with the geodesic distance and 
  \begin{align}
 \delta: \lambda \times G \to G, \quad   \delta_\lambda g = \delta_\lambda(1, g_1, g_2, g_3, \ldots) = (1, \lambda g_1, \lambda^2g_2, \lambda^3g_3, \ldots).
  \end{align}
  forms a limit of Carnot groups.
\end{proposition}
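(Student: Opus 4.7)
Plan: I would establish the claim by exhibiting $G$ as the projective limit of the classical free nilpotent Carnot groups obtained by truncating the tensor degree, and verify that the dilations in the statement are compatible with the truncation maps.

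First, for each $N \ge 1$, introduce the truncation
\[
\pi_N : \Hopf \to \Hopf^{(N)} \coloneqq \bigoplus_{m=0}^N (\R^n)^{\otimes m}
\]
and set $G^{(N)} \coloneqq \pi_N(G)$. Equipped with the truncated product (discarding components of degree $> N$), $G^{(N)}$ is the step-$N$ free nilpotent Lie group over $\R^n$; its Lie algebra inherits the natural grading $V_1 \oplus \cdots \oplus V_N$ coming from the tensor degree, with $V_1 \cong \R^n$ generating the full algebra under brackets. The dilations in the statement restrict to
\[
\delta_\lambda^{(N)} : G^{(N)} \to G^{(N)},\quad (1, g_1, \ldots, g_N) \mapsto (1, \lambda g_1, \lambda^2 g_2, \ldots, \lambda^N g_N),
\]
which are group automorphisms scaling the $k$-th stratum by $\lambda^k$. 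This is precisely the data of a Carnot group; endowing $V_1$ with the Euclidean inner product of $\R^n$ and extending left-invariantly defines a horizontal sub-bundle whose Carnot--Carath\'eodory distance $d_N$ is left-invariant and satisfies $d_N(\delta_\lambda^{(N)} g, \delta_\lambda^{(N)} h) = \lambda d_N(g,h)$, which I would simply cite from \cite{montgomery-02}.

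Second, I would verify that the truncations form a projective system compatible with the dilations. The projection $\pi_{N+1,N} : G^{(N+1)} \to G^{(N)}$ is a surjective group homomorphism by construction of the truncated product, and trivially satisfies $\pi_{N+1,N} \circ \delta_\lambda^{(N+1)} = \delta_\lambda^{(N)} \circ \pi_{N+1,N}$ since the scaling of each stratum commutes with the discarding of higher strata. The canonical identification $G \cong \varprojlim G^{(N)}$ follows immediately because $\Hopf$ is a product and an element of $G$ is determined by its projections. I would then define the limiting distance by
\[
d(g,h) \coloneqq \sup_N d_N(\pi_N g, \pi_N h) \in [0,\infty],
\]
which is monotone since the truncations are $1$-Lipschitz with respect to the canonical CC distances generated by $V_1$; homogeneity $d(\delta_\lambda g, \delta_\lambda h) = \lambda d(g,h)$ is then immediate from the homogeneity at each finite level, giving $(G, d, \{\delta_\lambda\}_{\lambda > 0})$ the desired limiting Carnot structure.

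The main obstacle here is conceptual rather than computational: $G$ itself is not a Carnot group in the strict sense, since it is infinite-dimensional and the limiting $d$ is generically $+\infty$ (bounded variation paths with infinite signature norm exist). The content of the statement is therefore the \emph{projective-limit} interpretation, and the care needed is in choosing the generating horizontal metric consistently across levels so that the truncations are genuinely $1$-Lipschitz; the canonical Euclidean metric on $V_1 \cong \R^n$ achieves this, but alternative choices would yield inequivalent projective families. Once this calibration is fixed, what remains is the routine bookkeeping that truncation commutes with both the group law and the scaling, which is visible directly from the block structure of $\Hopf$.
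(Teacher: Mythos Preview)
The paper does not actually supply a proof of this proposition; it is stated and then the text moves directly to the definition of the Pansu derivative. So there is nothing in the paper to compare your argument against, and your proposal in fact provides considerably more detail than the source.

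Your approach is the standard one and is correct in outline: each truncation $G^{(N)}=\pi_N(G)$ is the step-$N$ free nilpotent Lie group over $\R^n$, the grading by tensor degree makes it a stratified (Carnot) group with the stated dilations, the truncation maps $\pi_{N+1,N}$ are surjective group homomorphisms that intertwine the dilations, and $G$ is recovered as $\varprojlim G^{(N)}$. Citing \cite{montgomery-02} for the Carnot--Carath\'eodory distance and its homogeneity is appropriate. One small correction: your parenthetical ``bounded variation paths with infinite signature norm exist'' is false --- by the factorial bound in Remark~\ref{rm:compute signature}, the signature of any bounded variation path has finite $\ell^2$ norm in $\Hopf$. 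The phenomenon you want to point to is rather that $G$ contains group-like elements which are \emph{not} signatures of bounded variation paths (only of genuinely rough paths), and for those the limiting CC distance to the identity is $+\infty$. With that adjustment your discussion of the conceptual obstacle is accurate.
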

We now have all we need to define the (Pansu) derivative. We denote by $ G^\star $ the topological dual of $ G $.
\begin{definition}
  Let $ f : G \to \R $.
  We define $ Df : G \to G^\star $ as 
  \begin{align}
		Df(g)h = \lim_{\lambda\downarrow 0} \frac{f(g\delta_\lambda h) - f(g)}{\lambda} 
  \end{align}
  whenever this limit exists and call $Df(g)h$ the Pansu derivative of $f$ at $g$ in direction $h$.
  If $f$ has a Pansu derivative for all $g \in G$ then we say that $f$ is Pansu differentiable. Analogous we define the spaces $C^k(G,\R)$ of $k$-times Pansu differentiable functions.
\end{definition}
The Pansu derivative behaves very similar to the classic linear gradient.
For example, for the proof of convergence of gradient descent on $G$ we make use of the following ``Taylor expansion''.

	\begin{lemma} \label{lem:taylor}
		If $ f \in C^3(G,\R) $, then 
		\begin{align}
		f(gh) = f(g) + Df(g)h_1 + \frac12 D^2f(g)h_2 + O(\lVert h_3 \rVert)
		\end{align}
	\end{lemma}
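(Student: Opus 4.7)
The strategy is to reduce the lemma to the familiar one-variable Taylor theorem by exploiting the anisotropic dilation $\delta_\lambda$. For fixed $g, h \in G$, introduce the auxiliary function $\phi : [0,1] \to \R$ given by $\phi(\lambda) \coloneqq f(g \, \delta_\lambda h)$. Since $\delta_1$ is the identity, $\phi(0) = f(g)$ and $\phi(1) = f(gh)$. Because $f \in C^3(G,\R)$ and $\lambda \mapsto g\, \delta_\lambda h$ is polynomial in $\lambda$ (the degree-$m$ tensor component of $g\, \delta_\lambda h$ equals $\sum_{k=0}^{m} \lambda^{m-k}\, g_k \otimes h_{m-k}$), the map $\phi$ is $C^3$ on $[0,1]$, and the classical Lagrange form gives
\begin{align}
\phi(1) = \phi(0) + \phi'(0) + \tfrac{1}{2}\phi''(0) + \tfrac{1}{6}\phi'''(\xi)
\end{align}
for some $\xi \in (0,1)$.

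Next I would identify each derivative with the corresponding Pansu object. For the first-order term, $\phi'(0) = \lim_{\lambda \downarrow 0} \lambda^{-1}\bigl(f(g\, \delta_\lambda h) - f(g)\bigr) = Df(g)\, h$ by the very definition of the Pansu derivative; the homogeneity $(\delta_\lambda h)_m = \lambda^m h_m$ forces this limit to vanish for the tensor degrees $m \geq 2$, leaving only the contribution of the degree-one part, hence $\phi'(0) = Df(g)\, h_1$. Iterating this argument (or differentiating under the chain rule a second time and isolating the $\lambda^2$ leading coefficient of $g\, \delta_\lambda h$, which is linear in $h_2$ up to symmetric bilinear contributions of $h_1$ that are subsumed into the definition of $D^2 f$) gives $\phi''(0) = D^2 f(g)\, h_2$.

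Finally, I would control the remainder. Chain-rule expansion of $\phi'''$ produces sums of Pansu derivatives of $f$ of orders at most three, contracted against the first, second, and third $\lambda$-derivatives of $\lambda \mapsto g\, \delta_\lambda h$. Under the grading $(\delta_\lambda h)_m = \lambda^m h_m$, the $\lambda^3$-coefficient of $g\, \delta_\lambda h$ depends linearly on $h_3$ (with lower-order cross terms already absorbed in the first- and second-order Taylor coefficients), and the $C^3$ regularity of $f$ bounds the relevant Pansu derivatives uniformly on the bounded arc $\{g\, \delta_\lambda h : \lambda \in [0,1]\}$. Combining these yields the bound $|\phi'''(\xi)| = O(\|h_3\|)$ in the Carnot-homogeneous sense of $h \to 1_G$, which is exactly what the lemma claims.

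The main obstacle is the bookkeeping in the last step: the anisotropic scaling mixes tensor degrees through the noncommutative product in $\Hopf$, so cleanly separating which components of $h$ feed into which order of $\lambda$ requires a careful use of the grading of $\Hopf$ together with the fact that $\Phi(G)$ respects this grading. Once one checks that contributions of $h_1$ and $h_2$ to $\phi'''(\xi)$ are already fully accounted for by the first two Taylor terms, the remainder is genuinely of order $\|h_3\|$.
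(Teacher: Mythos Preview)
Your approach is essentially identical to the paper's: both introduce the one-variable auxiliary $\lambda \mapsto f(g\,\delta_\lambda h)$, invoke the classical Taylor expansion, and then identify the successive derivatives at $\lambda=0$ with the Pansu derivatives acting on the graded components $h_1,h_2,h_3$. The paper's argument is terser (it simply notes that the expansion ``translates into the asserted equation since $h^{\otimes n}$ is contained in $h_n$''), while you spell out the identification and the remainder bookkeeping more carefully, but there is no substantive difference in strategy.
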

	\begin{proof}
		Consider the function $ A : \R \to \R $
		\begin{align}
		A(\lambda) = f(g\delta_\lambda h)
		\end{align}
		then $ A $ is $ C^3 $ and by a (classical linear) Taylor expansion we may write
		\begin{align}
		A(1) = A(0) + \dot A(0) + \frac12 \ddot A(0) + O(\lVert A^{(3)}(0) \rVert )
		\end{align}
		which translates into the asserted equation since $ h^{\otimes n} $ is contained in $ h_n $.
	\end{proof}

  \begin{remark}
   A popular approach to differentiating functions of paths is to use a Fr\'echet derivative as in Malliavin calculus, i.e.~one identifies the space of paths as a linear space, see \cite{CASS2011542}.
   However, the above Pansu derivative is of a very different nature and -- by construction -- respects the non-Euclidean structure of paths resp.~tracks.
 \end{remark}
\paragraph{Gradient Descent on $G$.}
The idea of gradient descent to take a step in a direction that minimises $ f $ in a neighbourhood $ B_h $.
In our (Lie) group $G$, a natural choice is to choose some vector $ v \in \R^n $ and use an exponential neighbourhood $ g e^{\eta v} $ of $ g \in G $ where $\exp$ denotes the exponential from Lie algebra to Lie group.
Hence, the question becomes how choose $ v $ to minimise $ f(ge^{\eta v}) $.
To do so, note that
	\begin{align}
	f(ge^{\eta v}) = f(g\delta_\eta e^v)
	\end{align}
	whenever $ v \in \R^n $. Now using Lemma \ref{lem:taylor} shows 
	\begin{align}
	f(g\delta_\eta e^v) = f(g) + Df(g) e^{\eta v}_1 + \eta^2 = f(g) + \eta Df(g)\cdot v + \eta^2.
	\end{align}
	This suggests that the direction of steepest descent in the exponential neighbourhood is indeed given by the Pansu derivative (henceforth, and with slight abuse of notation, we identify the resulting element of $ G^\star $ as an element of the Hilbert space $ \Hopf $ since $G$ and $G^*$ both embed into $\Hopf$), %
  \begin{align}
    v = - Df(g).
  \end{align}
  This leads to the following geometric descent rule
	\begin{align} \label{eq:update}
	g_{i+1} = g_i e^{ -\eta D f(g_i)}.
	\end{align}
  As for classic gradient descent, we need a notion of convexity to guarantee convergence to a minimum. 
	\begin{definition}
		We say that a function $ f : G \to \R $ is geometrically convex if
		\begin{align}
      f(g\delta_\lambda (g^{-1}h)) \leq (1-\lambda)f(g) + \lambda f(h)
		\end{align}
		for any $ 0 \leq \lambda \leq 1 $.
	\end{definition}
	As in the linear case, one can show that if $ f : G \to \R $ is geometrically convex and $ C^2 $, then $ D^2f(g) $ is positive definite everywhere.
  However, $ D^2 f $ is not symmetric in general unlike in the linear case.
Putting everything together allows us to mimic the convergence proof of gradient descent in linear spaces which ultimately justifies the above informal derivation of the update rule.
  \begin{theorem}\label{thm:gradient descent}\,
    \begin{enumerate}
    \item \label{itm: transitive}
    The geometric update rule is transitive, that is, one may go from any point in the group to any other point using updates of the form \eqref{eq:update}.
\item\label{itm: convergence} 
  If $ f $ is geometrically convex and bounded from below with bounded second Pansu derivatives, then for $ \eta $ sufficiently small the sequence in Equation \eqref{eq:update} converges to a minimum
  \end{enumerate}
\end{theorem}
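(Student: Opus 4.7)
The two claims have very different flavours: item~\ref{itm: transitive} is purely algebraic/geometric and amounts to saying that the exponentials of horizontal vectors generate $G$, while item~\ref{itm: convergence} is a standard convergence-of-gradient-descent argument that I would carry out by translating the usual Euclidean proof using the Taylor expansion of Lemma~\ref{lem:taylor}.

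\textbf{Step 1 (Transitivity).} The plan is to exploit the fact, recorded implicitly in Remark~\ref{rm:compute signature} via \eqref{eq:expsig}, that every element of $G$ can be written as a finite product $e^{v_1}\cdots e^{v_L}$ with $v_1,\dots,v_L \in \R^n$; indeed, piecewise linear tracks have signatures of exactly this form, and the set of signatures of piecewise linear tracks is $G$. Given $g,h \in G$, I would write $g^{-1}h = e^{v_1}\cdots e^{v_L}$ and then exhibit, at each step $i$, a function $f_i \in C^3(G,\R)$ whose Pansu derivative at $g_i$ equals $-v_i/\eta$. A concrete choice is the affine function that extracts the degree-one component, $f_i(x) = -\langle v_i/\eta, x_1\rangle$, for which Pansu differentiability and the value of $Df_i$ follow directly from the definition since the scaling $\delta_\lambda$ acts by $\lambda$ on the degree-one coordinate. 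Iterating from $g_0=g$ yields $g_L=h$.

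\textbf{Step 2 (Descent lemma).} For part~\ref{itm: convergence}, I would first prove the standard descent inequality. Using Lemma~\ref{lem:taylor} with $h=e^{-\eta Df(g_i)}$ (so that $h_1 = -\eta Df(g_i)$ and $\|h_2\| = O(\eta^2\|Df(g_i)\|^2)$ by the explicit form of the exponential) I would obtain
\begin{align}
f(g_{i+1}) \leq f(g_i) - \eta\|Df(g_i)\|^2 + C\eta^2 \|Df(g_i)\|^2
\end{align}
where $C$ depends on the uniform bound on $D^2f$. Choosing $\eta < 1/(2C)$ gives $f(g_{i+1}) \leq f(g_i) - \tfrac{\eta}{2}\|Df(g_i)\|^2$. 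Summing and using that $f$ is bounded below yields $\sum_i \|Df(g_i)\|^2 < \infty$, hence $Df(g_i) \to 0$ and in particular $f(g_i)$ is monotonically decreasing to some limit $f_\infty \geq \inf f$.

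\textbf{Step 3 (Identifying the limit).} To promote this to convergence to a minimizer, I would use geometric convexity. Let $g^\star$ be any minimizer (existence follows from coercivity arguments similar to Theorem~\ref{thm:scoring tracks}~\ref{itm:c1}, or by taking $g^\star$ to be a cluster point of $(g_i)$ when one exists). Geometric convexity, combined with the first-order characterization $f(g^\star) \geq f(g_i) + Df(g_i)\cdot (\text{``}g_i^{-1}g^\star\text{''}) $ obtained by differentiating the convexity inequality at $\lambda=0$, yields
\begin{align}
f(g_i) - f(g^\star) \leq \|Df(g_i)\|\cdot \mathrm{dist}(g_i,g^\star).
\end{align}
Since $Df(g_i) \to 0$ and $(g_i)$ stays in a sublevel set of $f$ (hence bounded by coercivity), this forces $f(g_i) \to f(g^\star)$, and then strict geometric convexity (or standard uniqueness-of-limit-point arguments based on the Lyapunov function $f$) pins down the limit.

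\textbf{Anticipated obstacle.} The routine calculations are Euclidean in spirit; the one subtlety is that $Df(g_i)$ lives naturally in $G^\star$, which we implicitly identify with $\R^n$ via the first-degree component (as flagged in the paragraph preceding~\eqref{eq:update}), and I must make sure the norms and pairings used in the descent lemma are consistent with the Carnot scaling $\delta_\lambda$. The main genuine obstacle is therefore controlling the higher-order terms in Lemma~\ref{lem:taylor} uniformly along the trajectory; this is where the bounded-second-Pansu-derivative hypothesis enters and gives the constant $C$ above.
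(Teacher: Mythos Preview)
Your proposal is correct and follows essentially the same approach as the paper: for item~\ref{itm: transitive} the paper simply invokes Chow's theorem (that $G$ is generated by exponentials of elements of $\R^n$), which is precisely the content of your piecewise-linear signature observation, and for item~\ref{itm: convergence} the paper carries out exactly your Step~2 descent inequality via Lemma~\ref{lem:taylor} and the bound on $D^2f$. The paper in fact stops there---concluding ``converges to a minimum'' directly from monotonicity and boundedness below without ever invoking geometric convexity---so your Step~3 is more thorough than the original (though note that your appeal to coercivity is not among the stated hypotheses).
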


\begin{proof}
  Item \ref{itm: transitive} follows from Chow's theorem~\cite{Chow1940} which states that $ G $ is generated by simple exponentials.
  For Item~\ref{itm: convergence} let $ L $ be a bound on the derivatives of $ f $.
  We may write
  \begin{align}
		f(gh) \leq f(g) + Df(g)h_1 + \frac12 L\lVert h_2\rVert.		
  \end{align}
  Hence, if $ g_{n+1} = g_n e^{ -\eta D f(g_n)} $, then
  \begin{align}
		f(g_{n+1}) &\leq f(g_n) - \eta \lVert Df(g_n) \rVert^2 + \frac12 \eta^2 L\lVert Df(g_n) \rVert^2 = f(g_n) + (\frac{L}2\eta^2 - \eta)\lVert Df(g_n) \rVert^2
  \end{align}
  which is smaller than $ f(g_n) $ for $ \eta $ small enough whenever $ Df(g_n)\not= 0 $. Hence this is a strictly decreasing sequence and converges to a minimum.
\end{proof}

	Theorem \ref{thm:gradient descent} gives the analogous convergence guarantees as regular gradient descent and is simple to implement.

	\section{Experiments}\label{sec:experiments}
	
	In all our experiments we take as loss function $ \Lee $ the squared norm,
	\begin{align}
	\Lee(\bt) = \lVert \bt \rVert^2 = \sum_{m=1}^M \lVert \bt_m \rVert^2.
	\end{align}
  which is symmetric by Lemma \ref{lem:antipode}. 
  Note that the function $ \Lee $ is smooth as a sum of squares, and since $ \Phi $ is also smooth, since it is polynomial in the increments of its input when computed up to some fixed degree $ M $, we can easily compute $ d, H, $ and $ I $ by automatic differentiation. For the computation of $\Phi$ we use the signatory~\cite{kidger2021signatory} package which allows for fast and easy computations. 
	\begin{remark}
		In the experiments we will simulate sample paths from Brownian motion. As this is a reversible process it does not matter if entropy is computed from the left or the right by Corollary \ref{cor:entrev}, and we will normally compute it from the right. 
	\end{remark}

	\subsection{Comparing a warped time-series to itself}
	
	One natural question to ask is how well the regularised versions of DTW that allow for differentiation are able to incorporate the parametrisation invariance.
  One drawback of these regularisation of DTW is that it typically leads to a trade-off between the smoothness and parametrization invariance. 
	
  Given two TS $\bx$ and $\by$ we then use sDTW $d_{\text{sDTW}}(\bx,\by)$ which is the regularised version of DTW introduced in \cite{Blondel2021DifferentiableDB} which is differentiable and to a certain degree parametrisation invariant quantity.
  The scoring rule framework that we presented in the previous sections provides divergence not only between TS but probability measures on TS,
	\begin{align}
	d(\nu,\mu) &\equiv\E_{X\sim \nu}[ S(X,\mu)]-H(\nu),
	\end{align}
  but as a special case, we can restrict this to point measures to get a ``divergence'' between two TS akin to sDTW which reduces to the formula
	\begin{align}\label{eq:sig divergence}
	(\bx,\by) \mapsto d(\delta_{\bx},\delta_{\by}) = \Lee(\Phi(\bx)\Phi(\by)^{-1}).
	\end{align}
  To study the different behaviour between these two ``divergences'' -- sDTW and \eqref{eq:sig divergence} -- we generated samples paths from a Brownian motion to get $\bx$ and warped it a time change $ \varphi : [0,T] \to [0,T], \varphi(t) = T(t/T)^p $ to get $\by$ where $ p \in [1, \infty) $ is a parameter that determines the severity of the warping.
  Finally, we sampled on a discrete time grid with resolution $ 10^{-2} $.
  The results are shown in Figure \ref{fig:warp_comp}.
  The geometric score stays close to $ 0 $ regardless of the value of $ p $ but the sDTW divergence will increase with $ p $ and the rate of increase depends on the $ \gamma $ parameter.
  As $ \gamma $ tends to $ 0 $ the sDTW score will tend to the geometric score, but it will lose its smoothness while doing so.
  It is worth noting that $ \gamma = 1 $ is considered a default value.
	To achieve the same level of invariance enjoyed by the the divergence \eqref{eq:sig divergence} in a DTW paradigm on can use the classical non-smooth DTW algorithms which cannot be updated using gradient descent.
	In contrast, the signature divergence \eqref{eq:sig divergence} is always differentiable and parametrization invariant.
  Further, the signature divergence is more general in the sense that it is not just a divergence between TS but between probability measures on TS which allows in principle for many other applications such as variational inference. 
	\begin{figure}[H]
    \centering
		\includegraphics[width=4.1in]{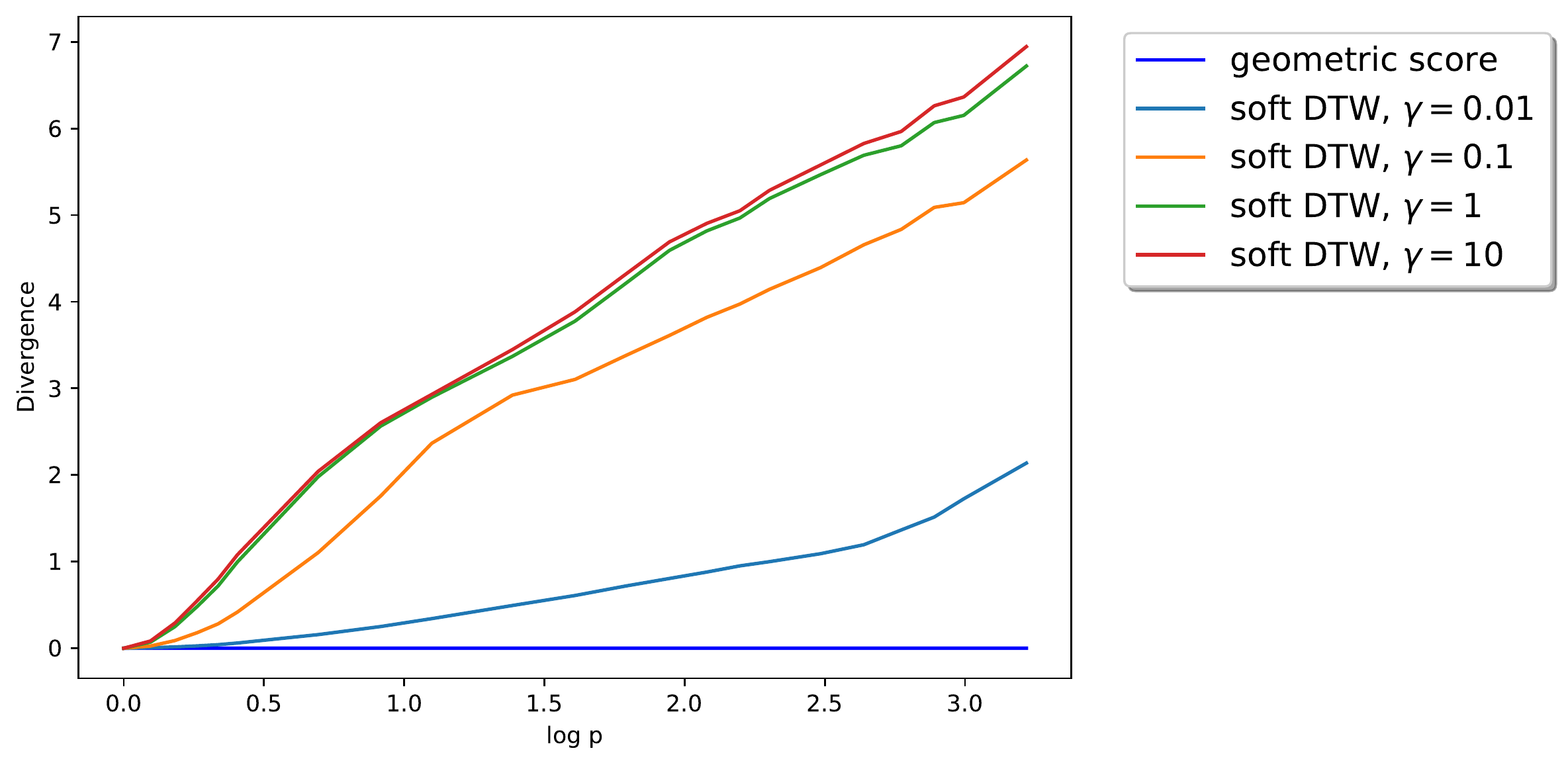}
		\caption{Geometric scoring compared to sDTW for different values of $ \gamma $ plotted against the logarithm of $ p $ for $ p $ between $ 1 $ and $ 25 $. }
		\label{fig:warp_comp}
	\end{figure} 
  The implementation of sDTW is taken from the excellent Python package from \cite{Blondel2021DifferentiableDB}.

	\subsection{Mutual information}
	Recall that the mutual information between two probability measures $ \mu $ and $ \nu $ is defined as
	\begin{align}\label{eq:MI}
	I(\mu,\nu) & \coloneqq H(\mu) - \E_{U \sim \nu}[ H(\mu|U)].
	\end{align}
  and provides a dependency measure between $\mu$ and $\nu$.
  Closest related to independence on paths are the \emph{signature cumulants}
  \cite{Bonnier2020} which have been proven to be useful in applications
  \cite{schell2021nonlinear, friz2021unified}.
  However, signature cumulants only compare probability measures on paths [tracks] with other probability measures on paths [tracks]. 
  In contrast, the mutual information \eqref{eq:MI} allows to measures dependency between a probability measure $\mu$ on paths [tracks] and a probability measure $\nu$ on an arbitrary measurable space; in particular, this allows to measure dependence relations between TS and scalar-valued random variables.
  
  \begin{figure}[H]
  	\centering
  	\includegraphics[align=t, scale=0.4]{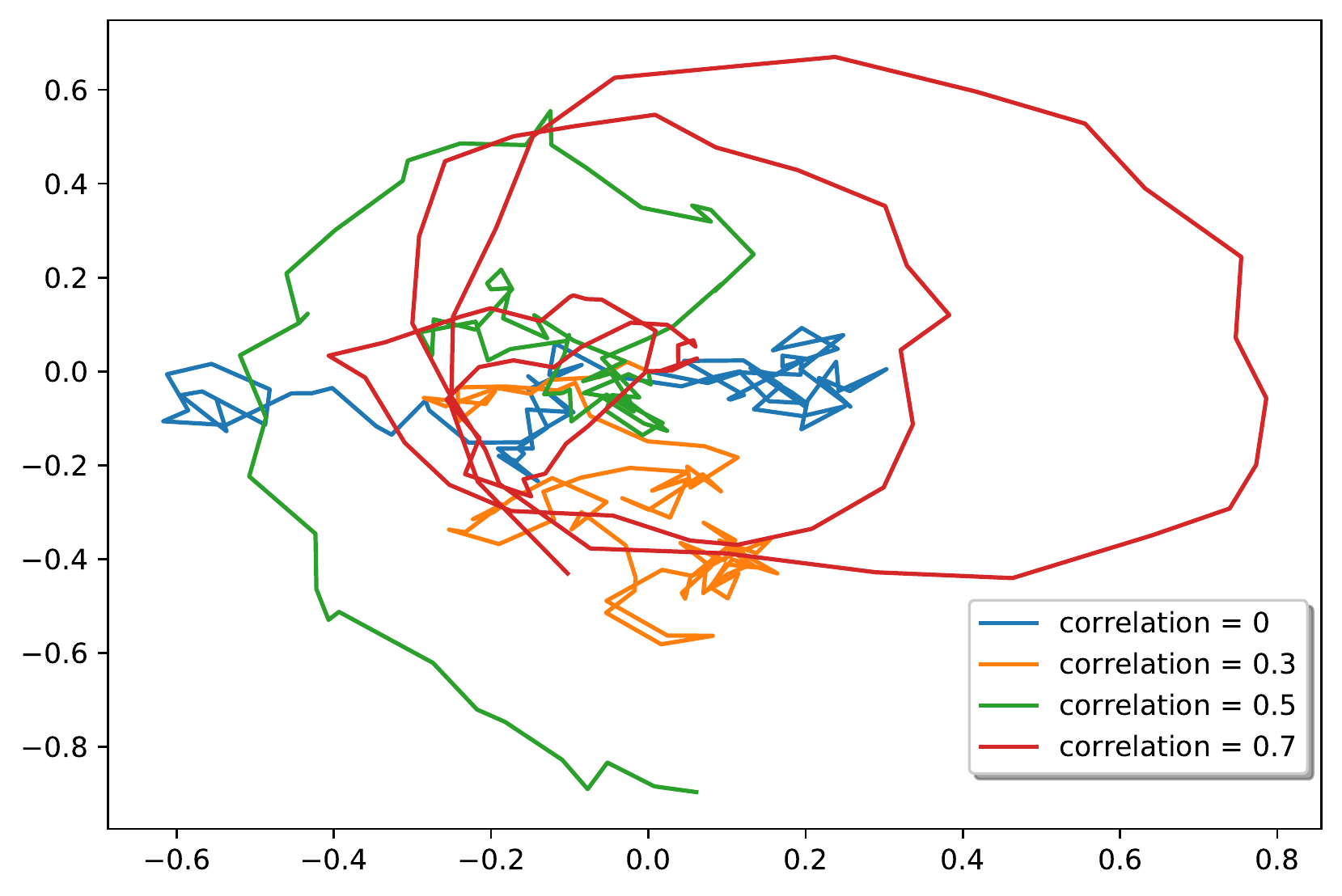}
  	\includegraphics[align=t,scale=0.4]{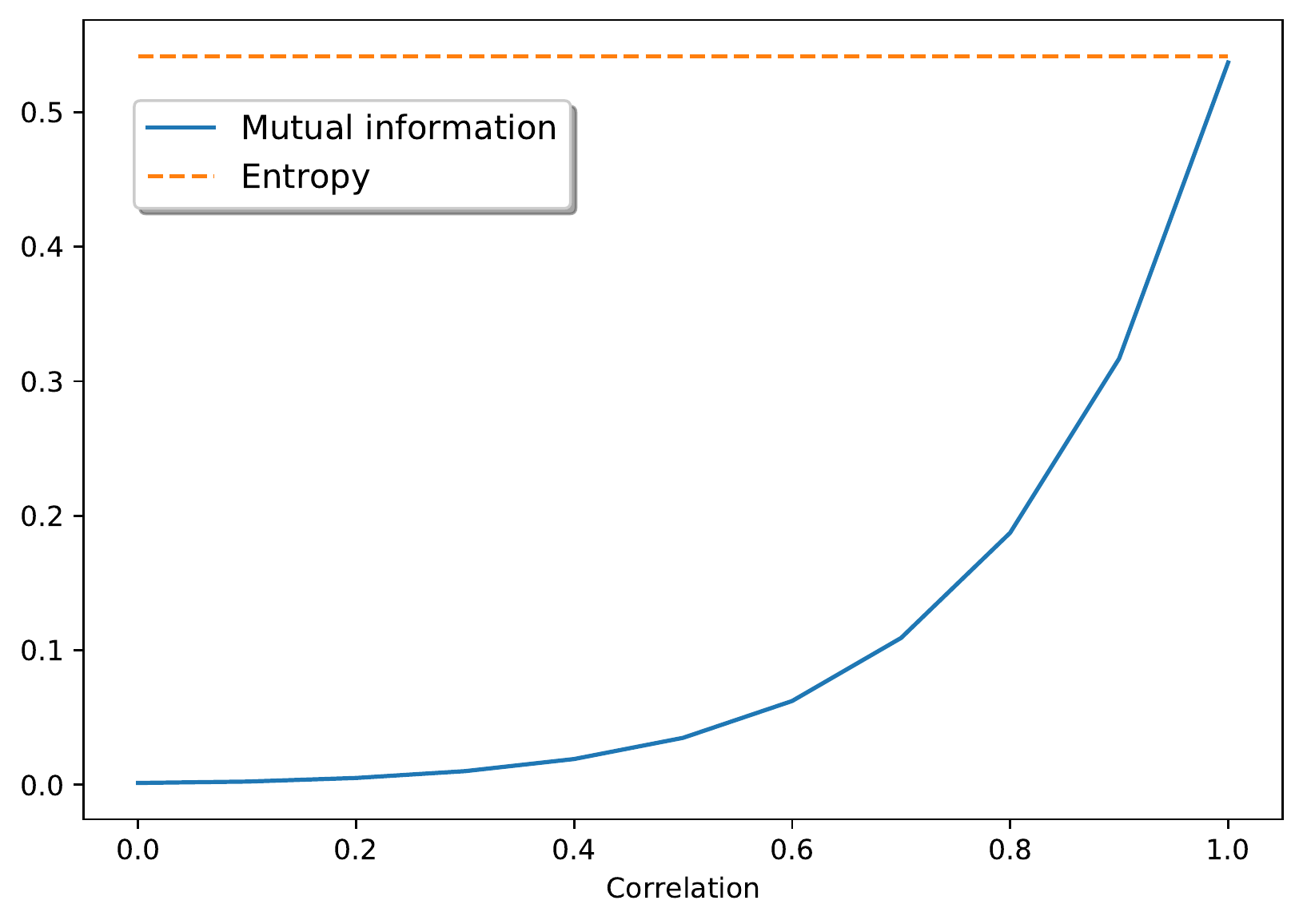}
  	\caption{The left hand side shows sample trajectories from $\bY$ for different correlation values $\rho$. The right plot shows the mutual information between $\bY$ and the rotational speed $\omega$ as a function of $\rho$.}
  	\label{fig:shift_MI}
  \end{figure} 
  
  To demonstrate this, we consider two examples
  \begin{description}
  \item[Mutual information between a stochastic process and a scalar.] 
   We consider the stochastic process $\by$ defined as 
    \begin{align}
      \by_t = \rho t {\cos \omega t \choose \sin \omega t} + \sqrt{1-\rho^2} \bx_t
    \end{align}
    where $\omega$ is sampled from the uniform distribution on $[0, 8\pi) $ and $\bx$ is a standard $ 2 $-dimensional Brownian motion independent from $ \omega $. $ \bx $ and $ \by $ are generated on the interval $ [0,1] $ on a grid with resolution $ 10^{-2} $.
    The right plot in Figure \ref{fig:shift_MI} shows some sample trajectories of the process $\by$.
    The left plot in Figure \ref{fig:shift_MI} shows the mutual information between (the law of) $ \omega $ and $ \by $ for various values of $\rho$.
    As expected, the mutual information increases monotone with $\rho$ and is bounded by the entropy.
    
	\item[Mutual Information between two unparametrized stochastic processes.] 
    We consider two independent a Brownian motions $ \bx $ and $ \by $ as before, and a random parametrisation $ \phi(t) = T(t/T)^p $ where $ p \in [1,10] $ is uniformly distributed independent of $ \bx $. The stochastic process $ \bz $ is defined as
	\begin{align}
    \bz_t = \rho \bx_{\phi(t)} + \sqrt{1-\rho^2} \by_t.
	\end{align}
	We then compute the mutual information between $ \bz $ and $ \bx $ as a function of $ \rho $. The results are shown in Figure \ref{fig:warp_MI}. 

  \end{description}

  \begin{figure}[H]
    \centering
    \includegraphics[align=t, scale=0.4]{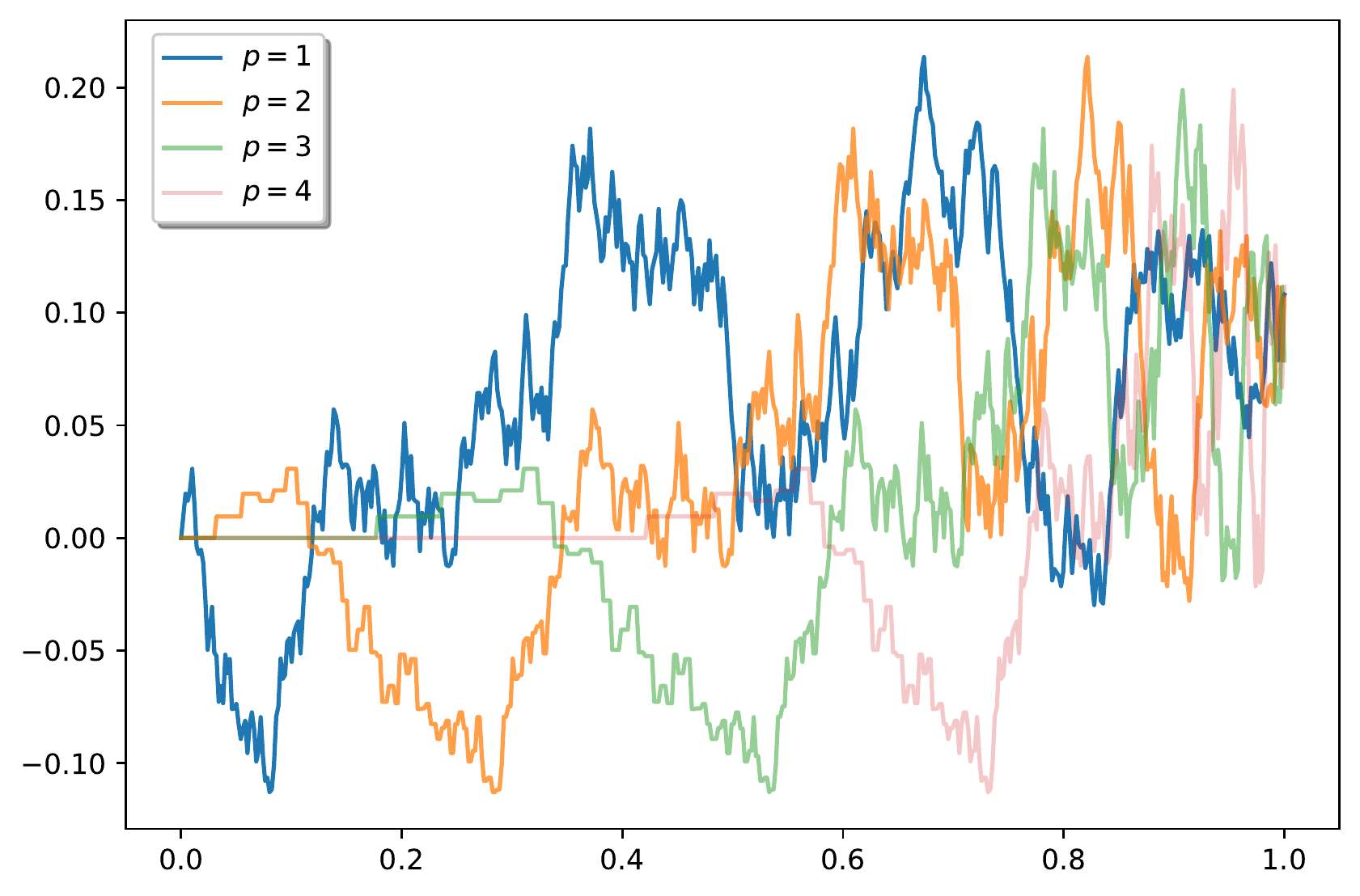}
    \includegraphics[align=t,scale=0.4]{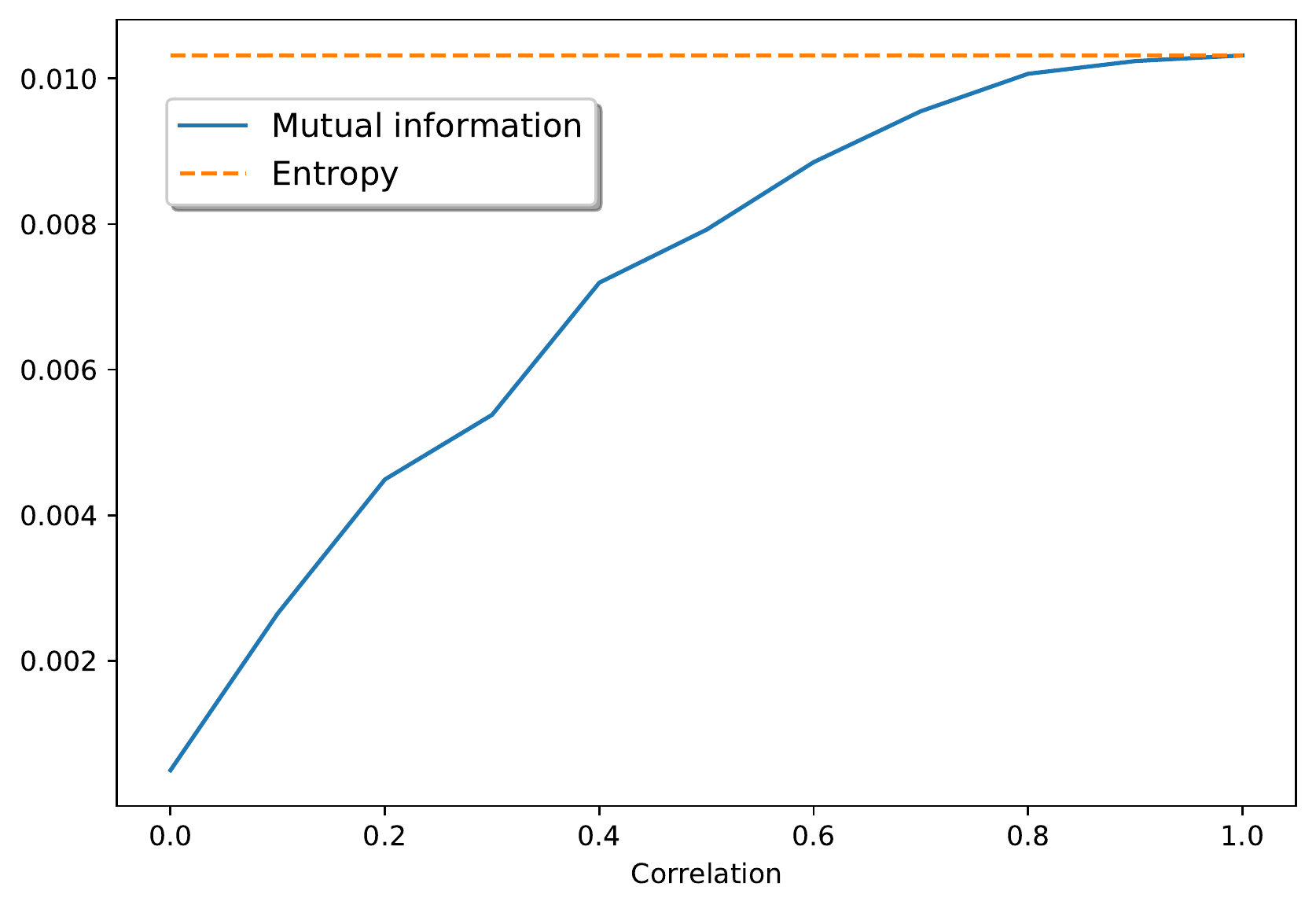}
    \caption{The left plot shows one-dimensional projections of sample paths of $ \bx $ for different warping parameters $ p $. The right plot shows the mutual information between $ \bx $ and $ \bz $ as a function of $ \rho $.}
		\label{fig:warp_MI}
		\label{fig:warp_plot}
	\end{figure} 
	
	\subsection*{Acknowledgements}
	PB is supported by the Engineering and Physical Sciences Research Council [EP/R513295/1]. 
	HO is supported by the EPSRC grant ``DATASIG'', the Turing Institute, and the Oxford-Man Institute of Quantitative Finance.
	
	\newpage	
	\appendix
	\section{Tree-like equivalence of paths}\label{app:TLE}

	Informally, a path $ x : [0,T] \to \R^n $ is \emph{tree-like} if the set it traces out in $ \R^n $ looks like a tree.
  The formal definition is 
	\begin{definition}[\cite{hambly-lyons-02}]
			A continuous path $ x : [0,T] \to E $ is said to be \emph{tree-like} if there exists an $ \R $-tree $ \tau $, a continuous map $ \varphi : [0,T] \to \tau $ and a map $ \psi : \tau \to E $ such that $ \varphi(0) = \varphi(T) $ and $ x = \psi \circ \varphi $.
      Two paths $ \bx, \by $ are \emph{tree-like equivalent} if $ \bx \star \by^{-1} $ is tree-like and we denote this relation with $\bx \sim \by$.
	\end{definition}
	That is, if $ \bx $ and $ \by $ follow the same trajectory in $ \R^n $ up to tree-like excursions then $\bx \sim \by$.
  In particular, if $\by$ is just $bx$ under time-reparametrization, $\by(t)=\bx(\tau(t))$ for an increasing $\tau$, then $\bx \sim \by$.
	See also~\cite[Appendix B]{Bonnier2020} for more details.
  It is easy to check that $\sim$ is an equivalence relation, hence we can define  
	\begin{align}
	\tracks{a}{b} \coloneqq \paths{a}{b} / {\sim} \text{, and } \Tracks\coloneqq \Paths / {\sim},
	\end{align}

	\bibliographystyle{unsrt}
	\bibliography{roughpaths}

  \end{document}